\documentclass[11pt]{article}
\usepackage{amsmath,amssymb,amsthm,graphicx,subfigure,float,url}

\usepackage{tikz}

\usepackage{parskip} 
\usepackage{pdfsync}
\graphicspath{{fig/}}

\usepackage{mathrsfs} 
\usepackage[colorlinks=true]{hyperref} 
\usepackage{pdfsync}

\usepackage{enumitem}
\usepackage{dsfont} 
\usepackage[T1]{fontenc} 

\topmargin -1cm
\textheight21cm
\textwidth15cm 
\oddsidemargin1cm

\def\R{\textrm{I\kern-0.21emR}}
\def\N{\textrm{I\kern-0.21emN}}

\renewcommand{\geq}{\geqslant}
\renewcommand{\leq}{\leqslant}

\renewcommand{\geq}{\geqslant}
\renewcommand{\leq}{\leqslant}

\newcommand {\Chi} {{\bf \raise 2pt \hbox{$\chi$}} }

\newcommand {\sgn} { {\rm sgn} }

%

%

 
\newcommand  \ind[1]  {   \mathds{1}_{#1}   }
\newcommand{\beq}{\begin{equation}}
\newcommand{\eeq}{\end{equation}}
\newcommand{\bea} {\begin{array}{rl}}
\newcommand{\eea} {\end{array}}
\newcommand{\bepa}{\left\{ \begin{array}{l}}
\newcommand{\eepa} {\end{array}\right.}

\newcommand{\bmu}{\begin{multline}}
\newcommand{\emu}{\end{multline}}

\newtheorem{theorem}{Theorem}  
\newtheorem{proposition}{Proposition}
\newtheorem{corollary}{Corollary}

\newtheorem{lemma}{Lemma}

\theoremstyle{definition}\newtheorem{remark}{Remark}
\title{Global stability with selection in integro-differential Lotka-Volterra systems modelling trait-structured populations}

\author{Camille Pouchol\thanks{Sorbonne Universit\'es, UPMC Univ Paris 06, CNRS UMR 7598, Laboratoire Jacques-Louis Lions, F-75005, Paris, France} \thanks{INRIA Team Mamba, INRIA Paris, 2 rue Simone Iff, CS 42112, 75589 Paris, France} \footnotemark[3] \; Emmanuel Tr\'elat\footnotemark[1] \thanks{e-mail: \href{mailto:pouchol@ljll.math.upmc.fr}{pouchol@ljll.math.upmc.fr} (corresponding author), \href{mailto:emmanuel.trelat@upmc.fr}{emmanuel.trelat@upmc.fr}}}


\begin{document}

\newcounter{assum}

\maketitle

\begin{abstract}
We analyse the asymptotic behaviour of integro-differential equations modelling $N$ populations in interaction, all structured by different traits. Interactions are modelled by non-local terms involving linear combinations of the total number of individuals in each population. These models have already been shown to be suitable for the modelling of drug resistance in cancer, and they generalise the usual Lotka-Volterra ordinary differential equations. Our aim is to give conditions under which there is persistence of all species. Through the analysis of a Lyapunov function, our first main result gives a simple and general condition on the matrix of interactions, together with a convergence rate. The second main result establishes another type of condition in the specific case of mutualistic interactions. When either of these conditions is met, we describe which traits are asymptotically selected. 
\end{abstract}

\section{Introduction}
\label{Section1}

\subsection{Biological motivations}

We are interested in the evolution of $N$ populations of individuals, each of which is structured by a continuous \textit{phenotype}, also called \textit{trait}. In each species the phenotype models some continuous biological characteristics (such as the size of the individual, the concentration of a protein inside it, etc). We shall consider both interactions inside a given population and between the populations and we assume that mutations can be neglected. Mathematical modelling and analysis of such ecological scenarios is one purpose of the field of adaptive dynamics, a branch of mathematical biology which aims at describing evolution among a population of individuals, see~\cite{Diekmann2004, Metz2014, Perthame2006} for an introduction to deterministic models. \par

The basis for our model stems from the logistic ODE $\, \frac{d N}{d t} = (r - d N)N$ where $r$ is the net growth rate, $d N$ the logistic death rate due to competition for nutrients and for space by direct or indirect inhibition of proliferation between individuals. Its natural extension to a density $n(t,x)$ of individuals of phenotype $x$ (say in $[0,1]$) is a non-local logistic model 
 \begin{equation}
 \label{Basis}
 \frac{\partial}{\partial t}  n(t,x) = \left(r(x) - d(x) \rho(t)\right) \, n(t,x), 
 \end{equation} with $\rho(t):= \int_0^1 n(t,x) \, dx$ the total number of individuals. \par 
Following~\cite{Zhou2014}, one might consider that these two terms combine both \textit{Darwinism selection} through the intrinsic growth rate $r$, and \textit{Lamarckism induction} through the logistic death rate since it depends on the environment. Note that these models can be derived from stochastic models at the individual level~\cite{Champagnat2008, Diekmann2003, Greene2015}, and more generally measure-valued functions $n$ can be considered~\cite{Gyllenberg2005}. The asymptotic behaviour of the previous model \eqref{Basis} and variants is analysed in~\cite{Greene2014, Lorz2013a, Perthame2006}, and one important property among others is that solution typically tend to concentrate on a few phenotypes, a convergence to Dirac masses in mathematical terms. These models are thus successful at representing the survival of only a few phenotypes, which we will refer to as \textit{selected}.  \par

To account for more complex interactions, one may want to consider a more general non-local logistic term than $d(x) \rho(t)=\int_0^1 d(x) n(t,y)\,dy$, in the form $\int_0^1 K(x,y) n(t,y)\,dy$. The behaviour of such equations strongly depends on how localised the kernel is, and therefore so do the mathematical techniques to analyse them. Indeed, with an added diffusion term, a special case of this situation is the non-local Fisher-KPP equation. When the kernel is localised (small as soon as $|x-y|$ is large), then the solutions typically remain bounded independently of the mutation rate~\cite{Hamel2014}: selection is no longer a feature. This property highlights how differently the solutions behave depending on the kernel, and that some choices are not appropriate for the ecological situation we are concerned with. \par

The non-locality $d(x) \rho(t)$ actually implies that interaction of an individual of phenotype $x$ with other individuals of phenotype $y$ has the same strength $d(x)$ regardless of $y$, because individuals do not only necessarily compete with those which have close phenotypes. As such, our choice can serve as a case study to understand the effect of a blind competition across individuals, essentially mediated by the total density. \par

In~\cite{Greene2014, Lorz2013a}, the biological motivation to use this type of models comes from drug resistance in cancer: the phenotype represents the level of resistance to a given drug and the authors argue that this might be a better approach than a discrete description of the phenotype taking only a finite number of values. Indeed, it can be correlated to some continuous biological characteristics, such as the intracellular concentration of a detoxication molecule, the activity of detoxifying enzymes in metabolizing the administered drug, or drug efflux transporters eliminating the drug~\cite{Chisholm2016}.  \par
This model is further developed in~\cite{Lorz2013a} to incorporate the healthy cell population. Neglecting mutations, it becomes a system of two integro-differential equations of the form 
\begin{equation*}
\begin{split}
     \dfrac{\partial n_1}{\partial t} (t,x)&= \left(r_1(x) - d_1(x)\left(a_{11} \rho_1(t) + a_{22} \rho_2(t)\right) \right) n_1(t,x), \\
     \dfrac{\partial n_2}{\partial t} (t,x)&= \left(r_2(x) - d_2(x) \left(a_{22} \rho_2(t) + a_{21} \rho_1(t)\right) \right) n_2(t,x),
\end{split}
\end{equation*}
with $a_{11}>0$, $a_{22}>0$, $\rho_1$, $\rho_2$ the total number of individuals in the cancer cell and healthy cell populations. The interspecific competition (between the two populations) is taken to be competitive, that is $a_{12}>0$, $a_{21}>0$, but below the intraspecific competition because each cell population is considered to belong to a different ecological niche:
\begin{equation} 
\label{IntraVSInter}
a_{12}<a_{11}, \; a_{21}<a_{22}.
\end{equation} 
In the context of a system arises the central question of persistence (whether asymptotically both species remain), complementing that of identifying which phenotypes are selected.
With the additional difficulty of control terms to represent chemotherapeutic drugs, the asymptotic properties of this model are elucidated in~\cite{Pouchol2016}, and assumption \eqref{IntraVSInter} happens to be crucial. \par
These integro-differential models have therefore already proved their efficiency at helping understanding phenotypic heterogeneity in cancer. The mathematical results available for $N=1$ and $N=2$ for competitive interactions naturally call for generalisations on systems of interacting species with such non-local logical terms based on the total number of individuals. For instance, to study resistance in cancer, one may think of different cancer subpopulations interacting with healthy cells and between them, each one of them being endowed with a specific drug resistance phenotype in a tumour 'bet hedging' strategy~\cite{Brutovsky2013}.
These generalisations, in turn, might help both unravel general principles about the underlying ecological processes, and develop new mathematical techniques to analyse them.

\subsection{The model} 
We consider $N$ populations structured by respective phenotypes $x\in{X_i}$, where $X_i$ is some compact subset of $\mathbb{R}^{p_i}$, with $p_i\in\N^*$, for $i=1,\ldots,N$. Although they model distinct quantities, we abusively denote all variables $x$ to improve readability.

The model writes
\begin{equation}\label{Equations}
\frac{\partial}{\partial t}  n_i(t,x) =
\left(r_i(x) + d_i(x) \sum_{j=1}^N  a_{ij} \rho_j(t)\right) n_i(t,x),\; \; \; \; i=1, \ldots, N,
\end{equation}
where, for $i=1, \ldots, N$, $r_i$ and $d_i > 0$ are functions in $L^\infty(X_i)$, $$\rho_i(t):=\int_{X_i} n_i(t,x) \, dx$$ is the total number of individuals in species $i$, and $a_{ij}\in\mathbb{R}$ are fixed (interaction) coefficients.

The initial conditions are 
\begin{equation}\label{InitialConditions}
n_i(0,\cdot) =n_i^0 \; \; \; \; i=1, \ldots, N
\end{equation}
where each initial density $n_i^0$ is taken to be a non-negative function in $L^1(X_i)$. From now on, we will call these equations \textit{integro-differential Lotka-Volterra equations}.
\par
The matrix $A:=(a_{ij})_{1\leq i,j \leq N}$, called \textit{matrix of interactions}, describes the interactions between the populations: if $a_{ij} >0$, the species $j$ acts positively on the species $i$, and negatively if $a_{ij}<0$. We will also say that the interaction between species $i$ and $j$ for $i \neq j$ is: 
\par
$\bullet$ \textit{mutualistic} if $a_{ij}>0$ and $a_{ji}>0$,
\par
$\bullet$ \textit{competitive} if $a_{ij}<0$ and $a_{ji}<0$,
\par 
$\bullet$ \textit{predator-prey like} if $a_{ij} a_{ji}<0$.

Finally, we will say that the equations are competitive (resp., mutualistic) if $a_{ij}<0$ (resp., $a_{ij}>0$) for all $i \neq j$. 

Another interpretation of the equations is to see them as coupled logistic equations of the form 
\begin{equation}\label{Equations2}
\frac{\partial}{\partial t}  n_i(t,x) =
\big(r_i(x) - d_i(x) I_i(t)\big) \, n_i(t,x),\; \; \; \; i=1, \ldots, N.
\end{equation}
In other words, the species $i$ reacts to its environment through the non-local variable $I_i$ defined for $i=1, \ldots, N$ by 
\begin{equation}\label{LogisticVariable}
I_i := - \sum_{j=1}^N  a_{ij} \rho_j.
\end{equation}
The terms $r_i(x)$ and $d_i(x) I_i$ respectively stand for the net proliferation rate and logistic death rate of individuals in species $i$, of phenotype $x$.

We will also use the notation $R_i(x, \rho_1, \ldots, \rho_N):=r_i(x) + d_i(x) \sum_{j=1}^N  a_{ij} \rho_j$, with which the equations rewrite: 
\begin{equation}\label{EquationsAgain}
\frac{\partial}{\partial t}  n_i(t,x) =
R_i\left(x, \rho_1(t), \ldots, \rho_N(t)\right)n_i(t,x),\; \; \; \; i=1, \ldots, N.
\end{equation}
These models generalise Lotka-Volterra ordinary differential equation (ODE) models~\cite{Baigent2010}: if the functions $r_i$, $d_i$ are all constant (say equal to some $r_i$, and $d_i=1$), then after integration with respect to $x\in{X_i}$, the equations boil down to
\begin{equation}\label{ClassicalLV}
\frac{d}{dt}  \rho_i(t) =
\left(r_i + \sum_{j=1}^N a_{ij} \rho_j(t)\right) \rho_i(t), \; \; \; \; i=1, \ldots, N,
\end{equation}
which we will from now on refer to as \textit{classical $N$-dimensional Lotka-Volterra equations}. Thus another advantage of a logistic term directly defined by $\rho$ is that it makes our model more tractable with respect to the corresponding already well understood ODE models. Conversely, the integro-differential model can be seen as a perturbation of the ODE one where the individuals among a given population are allowed to have different proliferation and death rates depending on their phenotype $x$.

Our goal is to understand the asymptotic behaviour of the solutions to these equations, both in terms of convergence at the level of the total number of individuals $\rho_i$, and in terms of concentration at the level of the densities $n_i$. The first problem is usual in population dynamics while the second is specific to adaptive dynamics and consists of determining which traits asymptotically survive, taking over the whole population. These are then called \textit{Evolutionary Stable Strategies}, and the fact that it is the generic situation has been coined \textit{exclusion principle}. Mathematically, this corresponds to a given density $n_i$ converging to a sum of Dirac masses. For one Dirac mass only, concentration writes, for some $x_0 \in X_i$:
\begin{equation}
n_i(t,\cdot) - \rho_i(t) \delta_{x_0}  \longrightarrow 0
\end{equation}
as $t \rightarrow +\infty$, in the weak sense of measures.  
\par
The more precise aim of this paper is to study the global asymptotic stability (GAS) of what we will call \textit{coexistence steady-states}, namely of possible $\rho^\infty$ with positive components (all species asymptotically survive) such that $\rho$ converges to $\rho^\infty$, because we will see how it determines on which phenotypes the densities concentrate. When it is possible, we will investigate the speed at which convergence and concentration occur. An interesting question within the scope of this paper is also to see if a result of that type is sharp, \textit{i.e.}, to compare the assumptions needed to obtain global asymptotic stability in our generalised setting to those known for classical Lotka-Volterra equations. 

At this stage, we did not make any restrictive assumptions on the matrix $A$. However, it will be clear from the results recalled below in the ODE case and the ones presented in Section \ref{Section2}, that answers to the previous questions are available when interspecific interactions are low compared to the intraspecific ones (reminiscent of \eqref{IntraVSInter}). Thus, we are covering the ecological scenario of each species $i$ having its own niche, but inside which competition (if $a_{ii}<0$) is blind because of the term $a_{ii} \rho_i$.

\paragraph{Notations.} In what follows, $\mathbb{R}^N_{>0}$ will stand for the positive orthant in $\mathbb{R}^N$, the set of vectors whose components are all positive, and we will write $x>y$ when $x-y \in \mathbb{R}^N_{>0}$. We will also use the usual ordering on the set of symmetric matrices: for $A$ a real symmetric matrices, we denote $A \geq 0$ (resp., $A>0$) when $A$ is positive semidefinite (resp., positive definite). Finally, $\mathcal{M}^1(X)$ will denote the set of Radon measures supported in $X$.
 
\subsection{State of the art} 

\paragraph{Classical Lotka-Volterra equations.}
The ODE system \eqref{ClassicalLV} has been extensively studied, dating back to the pioneering works of Lotka and Volterra for two populations of preys and predators~\cite{Lotka1924, Volterra1931}. Since then, many contributions to the analysis of steady-states and their stability have been made, and we refer to~\cite{Murray2002} for an introduction and to~\cite{Baigent2010} for a review. \par
Regarding the global asymptotic stability of coexistence steady-states, a very well-known result due to Goh~\cite{Goh1977} states a simple and very general condition on the matrix $A=(a_{ij})_{1\leq i,j \leq N}$ which ensures convergence to the (unique) coexistence steady-state: 
\smallskip
\begin{theorem}[\cite{Goh1977}] 
\label{GASforLV}
Assume that the equation $A \rho + r = 0$ (where $r\in\mathbb{R}^N$ and $\rho\in\mathbb{R}^N$ are the vectors $(r_i)_{1 \leq i \leq N}$ and $(\rho_i)_{1 \leq i \leq N}$) has a solution $\rho^\infty$ in $\mathbb{R}^N_{>0}$. 
If there exists a diagonal matrix $D>0$ such that $A^T D + D A < 0$, then $\rho^\infty$ is GAS in $\mathbb{R}^N_{>0}$ (and hence is the unique coexistence steady-state) for system \eqref{ClassicalLV}.
\end{theorem}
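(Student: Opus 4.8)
The plan is to exhibit an explicit strict Lyapunov function of Volterra type, tailored to the diagonal matrix $D$ furnished by the hypothesis, and then conclude by LaSalle's invariance principle. First I would dispatch the elementary points. Since each $d_i>0$ and the coefficients $r_i,a_{ij}$ are constants, the vector field in \eqref{ClassicalLV} is smooth, and the formula $\rho_i(t)=\rho_i(0)\exp\!\big(\int_0^t(r_i+\sum_j a_{ij}\rho_j(s))\,ds\big)$ shows that the open orthant $\mathbb{R}^N_{>0}$ is positively invariant. The hypothesis $A\rho^\infty+r=0$ means precisely $r_i=-\sum_j a_{ij}\rho_j^\infty$, so on $\mathbb{R}^N_{>0}$ the system rewrites as $\dot\rho_i=\rho_i\sum_{j}a_{ij}(\rho_j-\rho_j^\infty)$ for $i=1,\ldots,N$.

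Next, with $D=\mathrm{diag}(d_1,\ldots,d_N)$, I would introduce
\[
V(\rho):=\sum_{i=1}^N d_i\Big(\rho_i-\rho_i^\infty-\rho_i^\infty\ln\frac{\rho_i}{\rho_i^\infty}\Big),\qquad \rho\in\mathbb{R}^N_{>0}.
\]
The convexity of $s\mapsto s-1-\ln s$ on $(0,+\infty)$, with minimum $0$ attained only at $s=1$, gives $V\geq 0$ and $V(\rho)=0\Leftrightarrow\rho=\rho^\infty$; moreover $V(\rho)\to+\infty$ as $\rho$ tends to the boundary of the orthant or to infinity, so the sublevel sets $\{V\leq c\}$ are compact subsets of $\mathbb{R}^N_{>0}$. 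Differentiating along a trajectory, using $\partial_{\rho_i}V=d_i(\rho_i-\rho_i^\infty)/\rho_i$ together with the rewritten equations, yields, with $u:=\rho-\rho^\infty$,
\[
\dot V=\sum_{i,j}d_i\,(\rho_i-\rho_i^\infty)\,a_{ij}\,(\rho_j-\rho_j^\infty)=u^T D A\,u=\tfrac12\,u^T(A^T D+DA)\,u,
\]
where the last step uses that a scalar equals its transpose. By hypothesis $\dot V\leq 0$, and $\dot V=0$ forces $u=0$ since $A^T D+DA<0$.

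It remains to upgrade ``$V$ nonincreasing'' into convergence. Compactness of the sublevel sets shows that every forward orbit starting in $\mathbb{R}^N_{>0}$ remains in a fixed compact subset of the orthant, hence is defined for all $t\geq 0$ with nonempty $\omega$-limit set contained in $\mathbb{R}^N_{>0}$; LaSalle's invariance principle then confines this $\omega$-limit set to the largest invariant subset of $\{\dot V=0\}=\{\rho^\infty\}$, so $\rho(t)\to\rho^\infty$. Combined with the Lyapunov stability of $\rho^\infty$ (immediate from $V$ positive definite at $\rho^\infty$ and $\dot V\leq 0$), this is exactly GAS in $\mathbb{R}^N_{>0}$. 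Uniqueness of the coexistence steady-state then comes for free: if $Av=0$ then $v^T(A^T D+DA)v=0$, hence $v=0$, so $A$ is invertible and $\rho^\infty$ is the only positive solution of $A\rho+r=0$.

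The only point I expect to require genuine care is the claim that the sublevel sets of $V$ are compact \emph{inside the open orthant} — equivalently, that trajectories can neither blow up nor accumulate on a coordinate hyperplane — since this is what makes LaSalle applicable on the non-complete phase space $\mathbb{R}^N_{>0}$; everything else is short algebra. I would also stress that this argument yields no convergence rate a priori: it exploits only $\dot V\leq 0$ with strict definiteness at $u=0$, so quantifying the speed of convergence (as the paper later does for the integro-differential system) would need an additional linearization near $\rho^\infty$ or a Lyapunov estimate with explicit decay.
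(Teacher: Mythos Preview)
The paper does not prove this statement: it is quoted from \cite{Goh1977} in the ``State of the art'' subsection as a known result, with no argument supplied. So there is no in-paper proof to compare against. Your proposal is precisely Goh's classical argument---the Volterra-type Lyapunov function $V(\rho)=\sum_i d_i\big(\rho_i-\rho_i^\infty-\rho_i^\infty\ln(\rho_i/\rho_i^\infty)\big)$, the identity $\dot V=\tfrac12 u^T(A^TD+DA)u$, properness of $V$ on the open orthant, and LaSalle---and it is correct as written. The point you flag as delicate (compactness of sublevel sets inside $\mathbb{R}^N_{>0}$) is indeed the one place requiring a line of justification, and your observation that each summand blows up both as $\rho_i\to 0^+$ and as $\rho_i\to+\infty$ handles it.

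It may be worth noting that the paper's later proof of Theorem~\ref{GAS_Again} for the integro-differential system uses a Lyapunov functional that is a direct descendant of this one: the terms $\int m_i\big[n_i^\infty\ln(1/n_i)+n_i-n_i^\infty\big]$ are the density-level analogue of your $d_i(\rho_i-\rho_i^\infty-\rho_i^\infty\ln(\rho_i/\rho_i^\infty))$, and the same quadratic form $u^T(DA)u$ reappears in the derivative. In that setting the functional is no longer bounded below (it goes to $-\infty$ because of concentration), which is why the paper needs additional monotonicity arguments and obtains only an algebraic rate; your remark that the ODE argument yields no rate without further work is therefore well placed.
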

A result also worth stating is that the mere existence of a unique coexistence steady-state is not enough for it to be GAS. Other steady-states on the boundary of $\mathbb{R}^N_{>0}$ can attract trajectories even in dimension $N=2$. Another possibility is the occurrence of chaotic behaviour even in low dimension as evidenced in~\cite{Vance1978} for $N=3$.
Finally, we mention the more recent work~\cite{Coville2013}, where the authors tackle the problem of GAS for some type of Lotka-Volterra ODEs with mutations. In particular, they obtain GAS of the coexistence steady-state in the case where the logistic variables $I_i$, $i=1, \ldots, N$ all coincide, that is, when they are equal to some variable $I:=\sum_{j=1}^N a_{j} \rho_j(t)$. In such a case, it is proved that the convergence to the equilibrium is exponential. 
The result of GAS is also extended to perturbations of this specific case.

\paragraph{Integro-differential Lotka-Volterra equations.}
The first question for such equations is the existence of a solution for all positive times. This obviously does not hold true in full generality since the ODE $y'=y^2$ is a particular case. Let us first state an existence and uniqueness theorem.
\smallskip
\begin{theorem}
\label{ExistenceUniqueness}
Assume that for a given $n^0 \in \prod_{i=1}^{N} L^1(X_i)$, $n^0\geq 0$, there exists $0<\rho^{sup}$ such that we have an a priori upper bound $\rho(t) \leq \rho^{sup}$ for the functions $\rho_i$ whenever they are defined. Then the Cauchy problem \eqref{Equations}-\eqref{InitialConditions} has a unique solution $n=(n_i)_{1 \leq i \leq N}$, $n \geq 0$, in $C\left([0,+\infty),  \prod_{i=1}^{N} L^1(X_i)\right)$.
\end{theorem}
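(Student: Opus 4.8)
The plan is to eliminate the densities in favour of the scalar masses $\rho_i$. Since \eqref{EquationsAgain} is linear in each $n_i$ once the continuous path $\rho(\cdot)=(\rho_1(\cdot),\dots,\rho_N(\cdot))$ is prescribed, I would first observe (Duhamel's formula) that a nonnegative $n\in C([0,T],\prod_i L^1(X_i))$ solves \eqref{Equations}--\eqref{InitialConditions} if and only if
\[
n_i(t,x)=n_i^0(x)\,\exp\!\Big(\int_0^t R_i\big(x,\rho(s)\big)\,ds\Big),\qquad i=1,\dots,N,
\]
with $\rho_i(t)=\int_{X_i}n_i(t,x)\,dx$; one checks that the exponent is measurable and bounded in $x$ (because $r_i,d_i\in L^\infty(X_i)$ and $\rho$ is continuous), so the right-hand side genuinely belongs to $L^1(X_i)$, and $t\mapsto n_i(t,\cdot)$ is continuous into $L^1$ by dominated convergence. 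Plugging the first identity into the definition of $\rho_i$, solutions correspond exactly to fixed points of the map $\Phi:C([0,T],\mathbb R^N)\to C([0,T],\mathbb R^N)$ given by
\[
(\Phi\rho)_i(t):=\int_{X_i}n_i^0(x)\,\exp\!\Big(\int_0^t R_i\big(x,\rho(s)\big)\,ds\Big)\,dx .
\]

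Next I would prove local existence and uniqueness by a Picard--Banach argument for $\Phi$. On the ball of radius $1$ centred at the constant path $\rho(0)=(\int_{X_i}n_i^0)_i$ in $C([0,\tau],\mathbb R^N)$, the integrand $R_i(x,\rho(s))=r_i(x)+d_i(x)\sum_j a_{ij}\rho_j(s)$ is bounded in absolute value by a constant $M$ depending only on $\|r_i\|_\infty$, $\|d_i\|_\infty$, the $|a_{ij}|$ and $\max_i\int_{X_i}n_i^0$; moreover $\rho\mapsto R_i(x,\rho)$ is affine, hence Lipschitz with constant $\le\|d_i\|_\infty\max_j|a_{ij}|$ uniformly in $x$. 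Using $|e^a-1|\le e^{|a|}-1$ and $|e^a-e^b|\le e^{\max(|a|,|b|)}|a-b|$ one then gets, for $\tau$ small enough (in terms of the same data only), that $\Phi$ maps this ball into itself and is a strict contraction on it; Banach's theorem produces a unique fixed point, i.e.\ a solution on $[0,\tau]$.

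Nonnegativity of $n$ is then immediate from the exponential formula and $n_i^0\ge0$. For uniqueness I would argue directly: if $n^{(1)},n^{(2)}$ are two solutions on a common interval, the hypothesis gives $\rho^{(k)}\le\rho^{sup}$ there, so the corresponding exponents are bounded by a fixed $M$; subtracting the two exponential representations, using the above inequalities and integrating over $X_i$ yields
\[
|\rho^{(1)}(t)-\rho^{(2)}(t)|\le C\,e^{Mt}\int_0^t|\rho^{(1)}(s)-\rho^{(2)}(s)|\,ds
\]
for a constant $C$ depending only on the $d_i$, $a_{ij}$ and $\rho^{sup}$, so Grönwall's lemma forces $\rho^{(1)}\equiv\rho^{(2)}$ and hence $n^{(1)}\equiv n^{(2)}$.

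Finally I would globalise by a continuation argument, and this is the step I expect to require the most care. Let $[0,T^\ast)$ be the maximal interval of existence; if $T^\ast<\infty$, the a priori bound gives $\rho(t)\le\rho^{sup}$ on $[0,T^\ast)$, hence $\rho_i'(t)=\int_{X_i}R_i(x,\rho(t))\,n_i(t,x)\,dx$ is bounded there, so $\rho$ extends continuously up to $t=T^\ast$ and, by the exponential formula, $n_i(t,\cdot)$ has an $L^1$ limit as $t\to T^\ast$. The crucial point is that the local existence time $\tau$ obtained above depends only on $\rho^{sup}$ (since $\int_{X_i}n_i^0\le\rho^{sup}_i$), so the construction can be restarted from time $T^\ast$ with initial datum $n(T^\ast,\cdot)$ to extend the solution past $T^\ast$ — contradicting maximality. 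Therefore $T^\ast=+\infty$ and the solution is global; everything else is a routine Picard--Lindelöf / Grönwall argument transported to the $L^1$ setting.
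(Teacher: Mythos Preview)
Your argument is correct, and the overall architecture---exponential representation, Picard contraction, then iteration using that the local time step depends only on the a~priori bound $\rho^{sup}$---is the same as the paper's. The genuine difference is the space on which the fixed point is sought: the paper runs Banach--Picard directly on the densities, setting up $\Phi$ on a closed ball in $C\big([0,T],\prod_i L^1(X_i)\big)$ and showing it is a self-map and a contraction there, whereas you first integrate out $x$ and run the contraction on the masses in $C([0,\tau],\mathbb R^N)$, recovering $n$ afterwards from the exponential formula. Your reduction is more economical (finite-dimensional target, no need to control $L^1$-norms of the iterates), and it makes the dependence of $\tau$ on $\rho^{sup}$ alone completely transparent; the paper's version has the minor advantage of yielding $n\in C([0,T],\prod_i L^1)$ directly as the fixed point rather than as a post-processing step. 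Your separate Gr\"onwall uniqueness argument is also a slight variation: in the paper uniqueness comes for free from the contraction, while you give an independent estimate valid on any common interval of existence.
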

The proof follows the lines of that given in~\cite[Theorem 2.4]{Perthame2006} for $N=1$ and is detailed in Appendix \ref{AppA}.

In the case of a single equation, the asymptotic behaviour is well understood. For $N=1$, assuming $a_{11}<0$ to avoid blow-up, the equation is simply \[\frac{\partial}{\partial t}  n_1(t,x) = \left(r_1(x) - d_1(x) \rho_1(t)\right) \, n_1(t,x)\] where, without loss of generality, we have set $a_{11}=-1$. The first result is that $\rho_1$ converges.
\smallskip
\begin{theorem}\label{N=1}
Assume some regularity on $X_1$, $r_1$, $d_1$, and $r_1>0$, Then, for any positive continuous initial condition $n_1^0$, $\rho_1$ the function $t\mapsto \rho_1(t)$ is well defined on $[0,+\infty)$ and converges to $\rho_1^M:=\max_{x\in{X_1}}\dfrac{r_1(x)}{d_1(x)}$ as $t\rightarrow+\infty$.
\end{theorem}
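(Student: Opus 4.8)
The plan is to combine elementary a priori bounds with the closed form of the solution, reducing the convergence of $\rho_1$ to an almost-autonomous scalar equation whose (moving) equilibrium encodes the selection of $\rho_1^M$. I read the regularity hypotheses as: $X_1$ compact, $r_1,d_1$ continuous (so $0<r^-\le r_1\le r^+$, $0<d^-\le d_1\le d^+$ and $\rho_1^M=\max_{X_1}(r_1/d_1)$ is attained), $n_1^0$ continuous and positive (so $\rho_1(0)>0$), and $X_1$ such that $|\{r_1/d_1>\rho_1^M-\varepsilon\}|>0$ for every $\varepsilon>0$. First I would record the a priori bounds: integrating \eqref{Equations} in $x$ gives $(r^--d^+\rho_1)\rho_1\le\rho_1'\le(r^+-d^-\rho_1)\rho_1$, so $\rho_1$ stays in $[\rho_m,\rho_M]$ with $\rho_m:=\min(\rho_1(0),r^-/d^+)>0$ and $\rho_M:=\max(\rho_1(0),r^+/d^-)<\infty$; by Theorem~\ref{ExistenceUniqueness} the solution is global, and $\rho_1$ is Lipschitz since $|\rho_1'|\le(r^++d^+\rho_M)\rho_M$. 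Next, integrating \eqref{Equations} in $t$ at fixed $x$ (with $a_{11}=-1$) gives $n_1(t,x)=n_1^0(x)\exp(r_1(x)t-d_1(x)J(t))$ with $J(t):=\int_0^t\rho_1$; writing $\beta(t):=J(t)-\rho_1^M t$ (so $\beta'=\rho_1-\rho_1^M$) one gets $\rho_1(t)=\Phi_t(\beta(t))$, where $\Phi_t(b):=\int_{X_1}n_1^0(x)\exp\!\big(d_1(x)(\tfrac{r_1(x)}{d_1(x)}-\rho_1^M)t-d_1(x)b\big)\,dx$ is, for each fixed $t$, smooth, strictly decreasing (as $d_1>0$) and onto $(0,\infty)$.

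Let $b^*(t)$ be the unique root of $\Phi_t(b^*(t))=\rho_1^M$, smooth in $t$ by the implicit function theorem; note $\beta$ is attracted to $b^*(t)$, since $\beta(t)\gtrless b^*(t)$ forces $\rho_1(t)=\Phi_t(\beta(t))\lessgtr\rho_1^M$, i.e.\ $\beta'(t)\lessgtr 0$. The substance of the proof is the pair of ``selection'' statements $b^*(t)=o(t)$ and $\dot b^*(t)\to 0$. For the first I would plug $\Phi_t(b^*(t))=\rho_1^M$ into crude bounds: $b^*(t_k)\le-ct_k$ along a subsequence would make the exponent in $\Phi_{t_k}(b^*(t_k))$ at least $d^-ct_k/2$ on the non-negligible set $\{r_1/d_1>\rho_1^M-c/2\}$, forcing $\Phi_{t_k}(b^*(t_k))\to+\infty$; while $b^*(t_k)\ge ct_k$ would give, using $r_1/d_1\le\rho_1^M$, $\Phi_{t_k}(b^*(t_k))\le\|n_1^0\|_{L^1}e^{-d^-ct_k}\to 0$ — both absurd. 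For the second, differentiating $\Phi_t(b^*(t))=\rho_1^M$ gives $\dot b^*(t)=-\int_{X_1}(\rho_1^M-\tfrac{r_1}{d_1})\,d\mu_t$, with $\mu_t$ the probability measure proportional to $n_1^0 d_1\exp(d_1(\tfrac{r_1}{d_1}-\rho_1^M)t-d_1 b^*(t))\,dx$; since $b^*(t)=o(t)$, on $\{r_1/d_1\le\rho_1^M-\delta\}$ the exponent is $\le-d^-\delta t/2$ for $t$ large while the normalising mass $-\Phi_t'(b^*(t))$ lies in $[d^-\rho_1^M,d^+\rho_1^M]$, so $\mu_t$ concentrates on $\{r_1/d_1=\rho_1^M\}$ and $\dot b^*(t)\to 0$ (and $\dot b^*\le 0$ with $|\dot b^*|\le\rho_1^M-\min_{X_1}(r_1/d_1)$).

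The conclusion then follows from a tracking argument: $\frac{d}{dt}(\beta-b^*)=\big(\Phi_t(\beta)-\Phi_t(b^*)\big)-\dot b^*(t)$ has bounded right-hand side (no finite-time blow-up) and is uniformly damped — expanding $\Phi_t(b^*\pm\eta)-\rho_1^M=\int_{X_1}n_1^0\exp(d_1(\tfrac{r_1}{d_1}-\rho_1^M)t-d_1 b^*)(e^{\mp d_1\eta}-1)\,dx$ and using $\Phi_t(b^*)=\rho_1^M$ and $d^-\le d_1\le d^+$ shows $\beta'(t)\le(e^{-d^-\eta}-1)\rho_1^M<0$ when $\beta-b^*\ge\eta$ and $\beta'(t)\ge(e^{d^-\eta}-1)\rho_1^M>0$ when $\beta-b^*\le-\eta$; with $\dot b^*(t)\to 0$ this traps $\beta-b^*$ in $[-\eta,\eta]$ for $t$ large, so $\beta(t)-b^*(t)\to 0$. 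Then, by the mean value theorem, $|\beta'(t)|=|\Phi_t(\beta(t))-\Phi_t(b^*(t))|\le\big(\sup_{|\xi-b^*(t)|\le 1}|\Phi_t'(\xi)|\big)\,|\beta(t)-b^*(t)|\le e^{d^+}d^+\rho_1^M\,|\beta(t)-b^*(t)|\to 0$ (the bound on $\Phi_t'$ coming once more from $\Phi_t(b^*(t))=\rho_1^M$), hence $\rho_1(t)=\rho_1^M+\beta'(t)\to\rho_1^M$.

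The hard part will be the selection statements $b^*(t)=o(t)$ and $\dot b^*(t)\to 0$: the equilibrium $b^*(t)$ escapes to $-\infty$ exactly in the generic concentration case (maximising set of $r_1/d_1$ Lebesgue-negligible), and one must show it does so sublinearly, so that the amplifying weight $e^{-d_1 b^*(t)}$ cannot overcome the decay $e^{d_1(r_1/d_1-\rho_1^M)t}$ off $\{r_1/d_1=\rho_1^M\}$ — this is the precise form of ``selection acts on $r_1/d_1$'', and is where the regularity of $X_1$ is used. (As a shortcut when $\amax(r_1/d_1)$ is non-negligible: for any $x_M\in\amax(r_1/d_1)$ the functional $W(t):=\int_{X_1}\tfrac{n_1(t,x)}{d_1(x)}\,dx-\tfrac{\rho_1^M}{d_1(x_M)}\ln n_1(t,x_M)$ satisfies $\dot W=-\int_{X_1}(\rho_1^M-\tfrac{r_1}{d_1})n_1\,dx-(\rho_1^M-\rho_1)^2\le 0$ and is bounded below in that case, so Barbalat's lemma gives $\rho_1\to\rho_1^M$ directly; only the concentration case needs the finer argument.)
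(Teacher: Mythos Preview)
Your argument is correct; the implicit-function device $\rho_1(t)=\Phi_t(\beta(t))$ with the moving root $b^*(t)$ is a clean way to isolate the selection mechanism, and the bounds you sketch for $b^*(t)=o(t)$ and $\dot b^*(t)\to 0$ do go through under the paper's regularity assumption~\eqref{Topology}. The paper, however, does not prove Theorem~\ref{N=1} in the body: it cites~\cite{Pouchol2016} and says the proof rests on showing that $\rho_1$ is of bounded variation on $[0,+\infty)$. That route is markedly shorter: with $q:=\rho_1'$ one differentiates $q=\int_{X_1}R_1\, n_1$ to get $q'=\int_{X_1}R_1^2\, n_1 - q\int_{X_1}d_1\, n_1 \ge -q\int_{X_1}d_1\, n_1$, whence $(q_-)'\le -\big(\int_{X_1} d_1\, n_1\big)\,q_-\le -d^-\rho_m\, q_-$ and $q_-$ decays exponentially; since $\rho_1$ is bounded this makes it $BV$, hence convergent, and the limit is identified by the reasoning of Section~\ref{Section2}. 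This same $BV$ computation is what drives the cooperative-system result of Section~\ref{Section4}. Your approach, by contrast, makes the selection dynamics explicit through the trajectory of $b^*(t)$, at the price of the sublinear-growth estimate you rightly flag as the hard step. Finally, your parenthetical Lyapunov $W$ is exactly the $N=1$ specialisation of the paper's $V_1$ in Section~\ref{Section3} (take $n_1^\infty=\rho_1^M\,\delta_{x_M}$ and $m_1=1/d_1$); the paper handles the unbounded-below case you set aside not by assuming the argmax set non-negligible but via the lower bound $V\ge -C(\ln t+1)$, which still yields convergence, together with a rate.
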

This, in turn, completely determines where $n_1$ concentrates.
\smallskip
\begin{corollary} \label{N=1Conc}
Under the previous hypotheses, $n_1(t)$, viewed as a Radon measure on $X_1$, concentrates on the set \[\left\{x\in{X_1}, \; r_1(x) - d_1(x) \rho_1^M=0\right\}\] as $t\rightarrow+\infty$.
If this set is reduced to some $x_1^\infty$, we obtain in particular
\[
n_1(t,\cdot) \longrightarrow  \rho_1^M \delta_{x_1^\infty}
\]
as $t \rightarrow +\infty$ in $\mathcal{M}^1(X_1)$ equipped with its usual weak star topology.
\end{corollary}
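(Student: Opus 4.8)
The plan is to exploit that, $x$ being a mere parameter, equation \eqref{Equations} for $N=1$ integrates explicitly:
\[
n_1(t,x) = n_1^0(x)\,\exp\!\left(\int_0^t \big(r_1(x) - d_1(x)\rho_1(s)\big)\,ds\right),
\]
where, by Theorem \ref{N=1}, the map $t\mapsto\rho_1(t)$ is well defined on $[0,+\infty)$, bounded, and converges to $\rho_1^M=\max_{x\in X_1} r_1(x)/d_1(x)$. Setting $\phi(x):=r_1(x)-d_1(x)\rho_1^M$, continuity of $r_1,d_1$ and compactness of $X_1$ give $\phi\le 0$ on $X_1$ and $d_1\ge\delta_0>0$ for some $\delta_0$; moreover the concentration set in the statement is exactly $Z:=\{x\in X_1:\phi(x)=0\}$, which is nonempty and closed.

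First I would prove that the mass of $n_1(t,\cdot)$ leaves every closed set avoiding $Z$. Let $F\subset X_1$ be closed, hence compact, with $F\cap Z=\emptyset$; by continuity of $\phi$ there is $\eta>0$ such that $\phi\le-2\eta$ on $F$. Fix $\e>0$ with $\e\,\|d_1\|_{L^\infty}\le\eta$ and, using $\rho_1(t)\to\rho_1^M$, pick $T_\e$ with $\rho_1(s)\ge\rho_1^M-\e$ for all $s\ge T_\e$. Splitting the time integral at $T_\e$, the contribution of $[0,T_\e]$ is bounded by a constant $C_\e$ independent of $x$ (since $r_1,d_1\in L^\infty$ and $\rho_1$ is bounded), while for $x\in F$ and $t\ge T_\e$, writing $r_1(x)-d_1(x)\rho_1(s)=\phi(x)+d_1(x)(\rho_1^M-\rho_1(s))$,
\[
\int_{T_\e}^{t}\big(r_1(x)-d_1(x)\rho_1(s)\big)\,ds \;\le\; (t-T_\e)\big(\phi(x)+\e\,\|d_1\|_{L^\infty}\big) \;\le\; -\eta\,(t-T_\e).
\]
Hence $n_1(t,x)\le n_1^0(x)\,e^{C_\e}\,e^{-\eta(t-T_\e)}$ for $x\in F$, so that $\int_F n_1(t,x)\,dx \le e^{C_\e}\,e^{-\eta(t-T_\e)}\,\|n_1^0\|_{L^1(X_1)}$, which tends to $0$ as $t\to+\infty$.

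Then I would conclude. For any open neighbourhood $U$ of $Z$, applying the previous step to $F:=X_1\setminus U$ gives $\int_{X_1\setminus U} n_1(t,\cdot)\to 0$; since $\rho_1(t)=\int_{X_1} n_1(t,\cdot)\to\rho_1^M$, also $\int_U n_1(t,\cdot)\to\rho_1^M$. As $\rho_1^M>0$, this is precisely the assertion that $n_1(t)$ concentrates on $Z$ as $t\to+\infty$. If $Z=\{x_1^\infty\}$, then for $\psi\in C(X_1)$ and $\de>0$, choosing $U$ so that $|\psi-\psi(x_1^\infty)|\le\de$ on $U$,
\[
\left|\int_{X_1}\psi\,n_1(t,\cdot) - \psi(x_1^\infty)\,\rho_1^M\right| \le \de\,\rho_1^M + |\psi(x_1^\infty)|\left|\int_U n_1(t,\cdot)-\rho_1^M\right| + \|\psi\|_{L^\infty}\int_{X_1\setminus U} n_1(t,\cdot),
\]
where the last two terms tend to $0$; letting $t\to+\infty$ then $\de\to 0$ yields $n_1(t,\cdot)\to\rho_1^M\,\delta_{x_1^\infty}$ in the weak-$\star$ topology of $\mathcal{M}^1(X_1)$.

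I do not expect any genuinely hard step once Theorem \ref{N=1} is available; the only point requiring care is to make the exponential decay estimate \emph{uniform} in $x$ on $F$, which is why the time integral must be split at a time $T_\e$ beyond which $\rho_1$ is already within $\e$ of $\rho_1^M$ — the constant $C_\e$ coming from $[0,T_\e]$ is harmless because $\e$, hence $T_\e$ and $C_\e$, are frozen before $t$ is sent to infinity.
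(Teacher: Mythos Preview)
Your proof is correct and follows the same approach as the paper: once Theorem~\ref{N=1} gives $\rho_1(t)\to\rho_1^M$, the explicit exponential formula for $n_1(t,x)$ shows that $n_1$ decays wherever $r_1(x)-d_1(x)\rho_1^M<0$, which is exactly the argument the paper sketches in Section~\ref{Section2} (and invokes again in the last step of the proof of Theorem~\ref{GAS_Again}). Your time-splitting at $T_\e$ to make the decay uniform on $F$ is precisely the care the paper takes when it writes ``for $\varepsilon>0$ small enough and $t$ large enough\ldots''; the only cosmetic difference is that you carry the argument through to the weak-$\star$ limit explicitly, whereas the paper leaves that last step to the reader.
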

A proof of this result can be found in~\cite{Pouchol2016} in the special case $X_1=[0,1]$, and it relies on proving that $\rho_1$ is a bounded variation ($BV$) function on $[0, +\infty)$. Let us stress that when the set on which $n_1$ concentrates is not reduced to a singleton, the steady-state (at the level of $n_1$) is not unique. For example, if the set is made of two points, the repartition of the limiting density on these two points depends on the initial condition, see for example~\cite{Coville2013a}. This is why for this equation and the general equations considered here, there is no hope in proving general GAS results directly at the level of the densities $n_i$.

For a general logistic term $\left(\int_X K(x,y) n(t,y) \, dy\right) n(t,x)$ and a single equation, the asymptotic behaviour is also analysed in detail in both~\cite{Desvillettes2008} and~\cite{Jabin2011}. In the latter, under some suitable assumptions on the kernel $K$, a Lyapunov functional is used to prove that some measure is GAS, in a very specific sense depending on $K$. Similar results can be found in~\cite{Champagnat2010}, where their counterpart for competitive classical Lotka-Volterra equations are also discussed. \par

In the case of integro-differential systems, however, much less is known about the asymptotic behaviour. A Lyapunov functional inspired by \cite{Jabin2011} has been used successfully in~\cite{Pouchol2016} to prove GAS for a competitive system of two populations which writes exactly as \eqref{Equations}. We also mention~\cite{Busse2016} where an integro-differential system of two populations is analysed, and whose form does not fit in our framework. A particular triangular coupling structure allows the authors to perform an asymptotic analysis. \par

The paper is organised as follows. In Section \ref{Section2}, we explain how coexistence steady-states can be identified, allowing us to state rigorously what we mean by GAS for system \eqref{Equations}. We explain why, under the hypothesis of GAS, only some phenotypes are generically selected, and how to compute them. Then, we present the two main results about GAS for such equations. 
Section \ref{Section3} is devoted to the proof of the first result, which applies for any type of interactions and relies on analysing a suitably designed Lyapunov functional. In the specific case of mutualistic interactions, our second main result gives alternative conditions sufficient for GAS. It is presented in Section \ref{Section4}. In Section \ref{Section5}, we conclude with several comments and open questions.

\section{Possible coexistence steady-states and main results}
\label{Section2}

For the rest of the article, we will work with the following assumptions:
\begin{equation}
\label{DataRegularity}
r_i, d_i, n_i^0 \in C(X_i), \; n_i^0>0 \text{ for } i=1, \ldots, N.
\end{equation}
This will simplify statements, but we will be more specific below as to which data our results generalise. 

\subsection{Analysis of coexistence steady-states}

In the context of this system of integro-differential equations, the expression "GAS in $\mathbb{R}^N_{>0}$" must be defined. By that, we mean that there exists $\rho^\infty>0$ such that, whatever the positive continuous initial conditions $n_i^0$ are, $\rho_i$ converges to $\rho_i^\infty$ for all $i$. 

First, let us explain how to compute the possible steady-states at the level of $\rho$, \textit{i.e.}, possible limits $\rho^\infty>0$ for positive continuous initial conditions, with the following topological assumption on the sets $X_i$:
\begin{equation}
\label{Topology}
\forall x \in{\partial X_i}, \; \exists \eta > 0, \; \lambda_{p_i} \left(B(x,\eta) \cap X_i\right) > 0
\end{equation}
where $\lambda_{p_i}$ stands for the Lebesgue measure on $\mathbb{R}^{p_i}$ and $B(x, \eta)$ for the open ball of center $x$ and radius $\eta$. \par
Assume that each $\rho_i$ converges to some $\rho_i^\infty>0$, in which case the exponential behaviour of $n_i(t,x)$ is asymptotically governed by $r_i(x) + d_i(x) \sum_{j=1}^N  a_{ij} \rho_j^\infty$, the sign of which we can analyse as follows. \par 
\begin{itemize}
\item
If this quantity is positive for some $x_0$, let us prove that $n_i(t,x)$ blows up in its neighbourhood, leading to the explosion of $\rho_i$. \par 
If $r_i(x_0) + d_i(x_0) \sum_{j=1}^N  a_{ij} \rho_j^\infty>0$, we choose $\eta > 0$ such that $\lambda_{p_i} \left(B(x_0, \eta)\cap X_i \right) > 0$ and by continuity $r_i(x) + d_i(x) \sum_{j=1}^N  a_{ij} \rho_j^\infty>0$ on $\left(B(x_0, \eta)\cap X_i \right)$. This is possible whether $x_0 \in{int(X_i)}$ or also if $x_0 \in{\partial X_i}$ thanks to (\ref{Topology}). For $\varepsilon >0$ small enough and $t$ large enough (say $t \geq t_0$) such that $r_i(x_0) + d_i(x_0) \sum_{j=1}^N  a_{ij} \rho_j > \varepsilon$, we can write:
\begin{align*}
\rho_i(t) & \geq  \int_{B(x_0, \eta)\cap X_i} n_i(t,x) \, dx  \\
& \geq \int_{B(x_0, \eta)\cap X_i} n_i(t_0,x) \, e^{\int_{t_{ 0}}^t R_i\left(x,\rho_1(s), \ldots, \rho_N(s) \right) \, ds}  \, dx\\
& \geq \lambda_{p_i} \left(B(x_0, \eta)\cap X_i \right) \left( \inf_{B(x_0, \eta)\cap X_i} n_i(t_0,x)\right) \;  e^{ \varepsilon (t-t_0)}, 
\end{align*}
with the right-hand side blowing up as $t\rightarrow+\infty$, which cannot be compatible with the convergence of $\rho_i$.
\item
If $r_i+ d_i \sum_{j=1}^N  a_{ij} \rho_j^\infty$ is negative globally on $X_i$, this clearly implies the extinction of species $i$, which is also incompatible with the convergence of $\rho_i$ to a positive limit.
\end{itemize}
\smallskip
\begin{remark}
It is possible to relax the regularity on both the sets $X_i$ and the data $r_i$ and $d_i$ by working only with points which are both Lebesgue points of $\frac{r_i}{d_i}$ and of Lebesgue density $1$ for $X_i$, see~\cite{Evans2015}. If the functions $\frac{r_i}{d_i}$ are in $L^1(X_i)$, one can indeed check that $r_i+ d_i \sum_{j=1}^N  a_{ij} \rho_j^\infty \leq 0$ $a.e.$ on $X_i$.
\end{remark}
The previous results motivate the following definition:
\[ 
\label{CarryingCapacity}
I_i^\infty := \max_{x\in{X_i}}\dfrac{r_i(x)}{d_i(x)}, \; \; \; \; i=1, \ldots, N.
\] 
With this definition, a steady-state $\rho^\infty>0$ exists if and only if the following assumption holds: 
\begin{equation}
\label{ExistenceSteadyState}
\text{the equation } A \rho + I^\infty = 0 \text{ has a solution } \rho^\infty \text{ in } \mathbb{R}^N_{>0},
\end{equation}
which we assume from now on. \par
 
The previous computations also show that $n_i$ vanishes where $r(x) - d(x) I_i^\infty <0$ which implies the following result:
\smallskip
\begin{proposition}
Assume that assumption \eqref{ExistenceSteadyState} holds, and that $\rho$ converges to the coexistence steady-state $\rho^\infty$. Then, $n_i(t)$, viewed as a Radon measure, concentrates on the set \[K_i:= \big\{x\in{X_i}, \; r_i(x) - d_i(x) I_i^\infty=0\big\}\] as $t\rightarrow+\infty$, for all $i=1, \ldots, N$. \par
If, for some $i$, $K_i$ is reduced to some $x^\infty$, we obtain in particular
\[
n_i(t,\cdot) \longrightarrow \rho_i^\infty \delta_{{x^\infty_i}}
\]
as $t \rightarrow +\infty$ in $\mathcal{M}^1(X_i)$.
\end{proposition}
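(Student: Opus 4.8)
The plan is to combine the explicit representation of the solution along the characteristics in time (at fixed $x$) with the sign of the \emph{asymptotic growth rate}. Since $\rho^\infty$ solves $A\rho^\infty + I^\infty = 0$, for each $i$ one has $\sum_{j=1}^N a_{ij}\rho_j^\infty = -I_i^\infty$, so that
\[
g_i(x) := r_i(x) + d_i(x)\sum_{j=1}^N a_{ij}\rho_j^\infty = r_i(x) - d_i(x) I_i^\infty = d_i(x)\left(\frac{r_i(x)}{d_i(x)} - I_i^\infty\right) \leq 0 \quad \text{on } X_i,
\]
with equality exactly on $K_i$, which is nonempty and compact, being the set where the continuous function $r_i/d_i$ attains its maximum on the compact set $X_i$. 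First I would fix an arbitrary open set $V \supseteq K_i$; by continuity of $g_i$ and compactness of $X_i\setminus V$ there exists $\delta>0$ with $g_i(x)\leq -2\delta$ for all $x\in X_i\setminus V$.

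Next, using $\rho(t)\to\rho^\infty$ and the fact that $d_i$ is bounded on $X_i$, one gets $R_i(x,\rho_1(t),\ldots,\rho_N(t)) \to g_i(x)$ \emph{uniformly} in $x\in X_i$, since $|R_i(x,\rho(t))-g_i(x)| \leq \|d_i\|_\infty \sum_{j}|a_{ij}|\,|\rho_j(t)-\rho_j^\infty|$. Hence there is $t_\delta$ such that $R_i(x,\rho(t)) \leq -\delta$ for every $x\in X_i\setminus V$ and every $t\geq t_\delta$. Plugging this into the representation formula $n_i(t,x) = n_i(t_\delta,x)\exp\!\big(\int_{t_\delta}^{t} R_i(x,\rho(s))\,ds\big)$ (valid pointwise in $x$, as the equation is a scalar linear ODE in $t$ at fixed $x$) yields $n_i(t,x) \leq n_i(t_\delta,x)\,e^{-\delta(t-t_\delta)}$ on $X_i\setminus V$, and therefore
\[
\int_{X_i\setminus V} n_i(t,x)\,dx \;\leq\; e^{-\delta(t-t_\delta)}\,\rho_i(t_\delta) \;\xrightarrow[t\to+\infty]{}\; 0.
\]
Since the total mass $\rho_i(t)$ stays bounded (it converges to $\rho_i^\infty$), this is precisely the assertion that $n_i(t)$, as a Radon measure, concentrates on $K_i$: the mass outside any neighbourhood of $K_i$ vanishes, equivalently every weak-$\ast$ limit point of the bounded family $\big(n_i(t)\big)$ in $\mathcal{M}^1(X_i)$ is supported in $K_i$.

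For the last claim, assume $K_i=\{x_i^\infty\}$. The family $\big(n_i(t)\big)_{t\geq 0}$ is bounded in $\mathcal{M}^1(X_i)=C(X_i)^\ast$, so by Banach--Alaoglu any sequence $t_n\to+\infty$ has a weak-$\ast$ convergent subsequence, with limit $\mu$ supported in $\{x_i^\infty\}$ by the previous step, i.e. $\mu=c\,\delta_{x_i^\infty}$ for some $c\geq0$; testing against the constant function $1\in C(X_i)$ (no mass escapes, $X_i$ being compact) forces $c = \lim_n \rho_i(t_n) = \rho_i^\infty$. As the limit is the same along every sequence, the whole family converges, $n_i(t,\cdot)\to\rho_i^\infty\,\delta_{x_i^\infty}$ in $\mathcal{M}^1(X_i)$. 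The only genuinely delicate point is the uniform-in-$x$ convergence of the growth rate $R_i(x,\rho(t))\to g_i(x)$ and the compactness bookkeeping needed to turn $g_i\leq -2\delta$ away from $K_i$ into a uniform negative bound on the true growth rate for large $t$; once that is in place, the exponential decay of the tail mass and the identification of the weak-$\ast$ limit are routine.
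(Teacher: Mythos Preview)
Your proof is correct and follows essentially the same approach as the paper: the paper's argument is embedded informally in the discussion preceding the proposition, where it is observed that $n_i$ vanishes where $r_i(x)-d_i(x)I_i^\infty<0$ because the exponential behaviour of $n_i(t,x)$ is asymptotically governed by this quantity. Your version makes the compactness bookkeeping and the weak-$\ast$ limit identification explicit, which the paper leaves to the reader.
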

\smallskip
\begin{remark}
In the hypothesis of global existence and convergence of $\rho$ towards $\rho^\infty$, the previous reasoning actually shows that the concentration is ensured as soon as $n_i^0 \in L^1\left(X_i\right)$ is bounded by below by a positive constant on a neighbourhood of one of the points of $K_i$. For more general hypotheses ensuring concentration, we refer to~\cite{Jabin2011}.

\end{remark}
\smallskip
\begin{remark}
If all the sets $K_i$ are reduces to some singletons $x_i^\infty$, then the dynamics of $\rho$ are asymptotically governed by classical Lotka-Volterra equations concentrated in $\left(x_1^\infty, \ldots, x_N^\infty\right)$, namely 
\begin{equation*}
\frac{d}{dt}  \rho_i(t) =
\left(r_i\left(x_i^\infty\right) + d_i\left(x_i^\infty\right) \sum_{j=1}^N a_{ij} \rho_j(t)\right) \rho_i(t), \; \; \; \; i=1, \ldots, N.
\end{equation*}
For a precise statement, see~\cite{Pouchol2016}.
\end{remark}

\subsection{Analysis of coexistence steady-states}

Our first approach to prove GAS is to further generalise the approach of~\cite{Pouchol2016} in dimension $N$. The main idea is to mix a Lyapunov functional which is inspired by the one designed in~\cite{Jabin2011} and the Lyapunov functional used for classical Lotka-Volterra equations~\cite{Goh1977}, which is the key tool to obtain Theorem \ref{GASforLV}. With some mild regularity assumptions on the data, we obtain the following result: 
\smallskip
\begin{theorem}\label{GAS}
Assume \eqref{ExistenceSteadyState} and that there exists a diagonal matrix $D>0$ such that $DA$ is symmetric and $D A<0$. Then the solution to the Cauchy problem \eqref{Equations}-\eqref{InitialConditions} is globally defined. Furthermore, the solution $\rho^\infty$ to $A \rho + I^\infty=0$ is GAS (and hence, unique).
\end{theorem}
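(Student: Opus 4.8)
The plan is to transport Goh's Lyapunov function for the ODE system \eqref{ClassicalLV} (the engine of Theorem~\ref{GASforLV}) to the integro-differential setting, adding a Jabin--Raoul-type correction that keeps track of trait selection. Observe first that the hypothesis strengthens Goh's condition: ``$DA$ symmetric'' forces $A^T D = DA$, hence $A^T D + DA = 2DA < 0$. I would begin with a change of unknowns adapted to the equilibrium. Setting $\psi_i(x) := d_i(x)I_i^\infty - r_i(x)$, one has $\psi_i \geq 0$ on $X_i$ with $\psi_i \equiv 0$ on $K_i$ by the very definition of $I_i^\infty$, and using \eqref{ExistenceSteadyState} to eliminate $I^\infty$ the growth rate becomes $R_i(x,\rho(t)) = d_i(x)\sum_j a_{ij}(\rho_j(t)-\rho_j^\infty) - \psi_i(x)$. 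The key structural point is that $R_i/d_i$ is \emph{affine} in the coupling variables $\rho_j - \rho_j^\infty$; this suggests working with the ``$d_i$-weighted'' quantities $\sigma_i(t) := \int_{X_i} n_i(t,x)/d_i(x)\,dx$ and $\ell_i(t) := \int_{K_i} (\ln n_i(t,x))/d_i(x)\,\gamma_i(dx)$, where $\gamma_i$ is any fixed probability measure carried by $K_i$ (e.g.\ $\delta_{x_i^\infty}$ when $K_i$ is a singleton). Differentiating along the flow, and using that $\psi_i$ vanishes on $\mathrm{supp}\,\gamma_i$, one gets
\[
\sigma_i'(t) = \Big(\sum_{j} a_{ij}(\rho_j-\rho_j^\infty)\Big)\rho_i(t) - \int_{X_i}\frac{\psi_i}{d_i}\,n_i\,dx, \qquad \ell_i'(t) = \sum_{j} a_{ij}(\rho_j-\rho_j^\infty).
\]

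Next I would establish global existence from the hypothesis. Write $D = \mathrm{diag}(\delta_1,\dots,\delta_N)$, $\delta_i>0$, for the matrix in the statement, and let $\mu > 0$ be the smallest eigenvalue of the positive definite matrix $-DA$. The functional $\phi := \sum_i \delta_i \sigma_i$ is comparable to $\|\rho\|_1$, since each $d_i$ is continuous and positive on the compact set $X_i$; dropping the nonnegative $\psi_i$-term in $\sigma_i'$ and writing $\rho = (\rho - \rho^\infty) + \rho^\infty$,
\[
\phi'(t) \leq \rho(t)^T DA\,(\rho(t)-\rho^\infty) \leq -\mu\|\rho(t)-\rho^\infty\|^2 + \|DA\rho^\infty\|\,\|\rho(t)-\rho^\infty\|,
\]
which is negative once $\|\rho(t)-\rho^\infty\|$ is large. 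Hence $\phi$, and therefore $\rho$, stays a priori bounded on the maximal interval of existence, and Theorem~\ref{ExistenceUniqueness} delivers a global solution.

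Now the Lyapunov functional itself: take $\mathcal{L}(t) := \sum_{i=1}^N \delta_i\big(\sigma_i(t) - \rho_i^\infty \ell_i(t)\big)$. Combining the two derivative formulas, the coupling contributions fuse into exactly Goh's quadratic form while the selection contributions stay apart:
\[
\mathcal{L}'(t) = \big(\rho(t)-\rho^\infty\big)^T DA\,\big(\rho(t)-\rho^\infty\big) - \sum_{i=1}^N \delta_i\int_{X_i}\frac{\psi_i(x)}{d_i(x)}\,n_i(t,x)\,dx \leq -\mu\,\|\rho(t)-\rho^\infty\|^2 \leq 0,
\]
using that $DA$ is symmetric negative definite and $\psi_i \geq 0$. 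If $\mathcal{L}$ is bounded below, then, being nonincreasing, it converges, and integrating the above yields $\int_0^\infty \|\rho(s)-\rho^\infty\|^2\,ds < \infty$ (and likewise $\int_0^\infty\!\int_{X_i}(\psi_i/d_i)n_i\,dx\,ds < \infty$, which feeds into the concentration statement). Since $\rho$ is bounded, the right-hand sides $\rho_i'(t) = \int_{X_i} R_i(x,\rho(t))n_i(t,x)\,dx$ are bounded too, so $t \mapsto \rho(t)-\rho^\infty$ is Lipschitz; a Barbalat-type argument then forces $\rho(t) \to \rho^\infty$, which in turn makes $\rho^\infty$ the unique coexistence steady state. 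A convergence rate should come out of the Lyapunov inequality itself.

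The hard part will be showing that $\mathcal{L}$ is bounded below. Goh's $\rho_i - \rho_i^\infty \ln \rho_i$ is bounded below just because $\rho_i$ is bounded, but here the term $-\rho_i^\infty \ell_i$ involves $\ln n_i$ on $K_i$, and since $\ell_i(t) = \ell_i(0) + \int_0^t \sum_j a_{ij}(\rho_j(s)-\rho_j^\infty)\,ds$, bounding $\mathcal{L}$ below is equivalent to showing that the densities do not concentrate ``too violently'' on the selection sets $K_i$ — precisely, that $\int_0^t \sum_j a_{ij}(\rho_j(s)-\rho_j^\infty)\,ds$ stays bounded above. I expect this to require combining the a priori bound on $\rho$ with the behaviour of $\psi_i$ near $K_i$ (through the representation $n_i(t,x) = n_i^0(x)\exp\int_0^t R_i(x,\rho(s))\,ds$), and it is here that the single-trait results and the Jabin--Raoul functional machinery really enter the argument.
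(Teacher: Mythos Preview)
Your Lyapunov functional $\mathcal{L}$ coincides (up to an additive constant and the choice of reference measure $\gamma_i$) with the paper's $V$, and your computation of $\mathcal{L}'$ matches the paper's formula for $\frac{dV}{dt}$. Your route to global existence via the bound on $\phi'$ is also fine (and in fact slightly simpler than the paper's).

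The genuine gap is in how you propose to close the argument. You aim to show that $\mathcal{L}$ is bounded below and then apply Barbalat. This cannot work: $\mathcal{L}$ is \emph{not} bounded below. As concentration onto $K_i$ proceeds, $n_i(t,\cdot)$ blows up pointwise on $K_i$, so $\ell_i(t)\to+\infty$ and $\mathcal{L}(t)\to-\infty$; the paper makes this explicit in the remark following the proof. Equivalently, the integral $\int_0^t \big(A(\rho-\rho^\infty)\big)_i\,ds$ that you hope to bound from above diverges to $+\infty$. So the difficulty you flag is not merely ``hard''---the target itself is false.

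The paper resolves this by a different mechanism, and it is precisely here that the \emph{symmetry} of $DA$---which your argument never uses beyond the opening remark---becomes essential. Instead of a uniform lower bound on $V$, the paper
\begin{enumerate}
\item shows that $G(t):=\tfrac{1}{2}u^T(A^TD+DA)u+2\sum_i\lambda_iB_i$ (structurally $2\mathcal{L}'$ minus a nonpositive remainder) is \emph{non-decreasing}; the computation of $G'$ leaves exactly the term $\big\langle (DA)\tfrac{du}{dt},u\big\rangle-\big\langle (DA)u,\tfrac{du}{dt}\big\rangle$, which vanishes if and only if $DA$ is symmetric;
\item obtains only a \emph{logarithmic} lower bound $V(t)\geq -C\ln t$, via the pointwise estimate $n_i(t,x)\leq Ct$ (this is where the Lipschitz assumption \eqref{Regularity} enters);
\item combines these: since $\tfrac{dV}{dt}\leq\tfrac{1}{2}G$ and $G$ is non-decreasing, $V(t)-V(0)\leq \tfrac{t}{2}G(t)$, whence $G(t)\geq -C\,\tfrac{\ln t}{t}\to 0$.
\end{enumerate}
This yields both $\rho\to\rho^\infty$ and the rate $\mathrm{O}\big((\ln t/t)^{1/2}\big)$. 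You have the right functional and the right derivative; what is missing is the monotonicity-of-the-derivative step, which replaces the unavailable lower bound and is the real reason the theorem requires $DA$ symmetric rather than merely $A^TD+DA<0$.
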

We emphasise that there is no assumption on the type of interactions, \textit{i.e.}, on the sign of the coefficients of $A$. However, a consequence of this result is that $A$ must be such that $a_{ii}a_{jj}>a_{ij}a_{ji}$ for all $i$, $j$. For this result to apply, interactions must therefore be stronger inside species than between them. \par
We also remark that our hypothesis is exactly the one exhibited in~\cite{Champagnat2010} for competitive classical Lotka-Volterra equations. The analysis of the Lyapunov functional allows to determine a speed at which convergence to $\rho^\infty$ and concentration on a given set of phenotypes occur. In dimension $2$, we also analyse more deeply the link between this condition and the one for classical Lotka-Volterra equations, which in most interesting cases happen to be equivalent. 

Our second main result focuses on the special case of mutualistic interactions, and an informal statement of the theorem is the following.
\smallskip
\begin{theorem}\label{GASMut}
Assume \eqref{ExistenceSteadyState}, that for $i=1, \ldots, N$, $r_i>0$ and that for some explicit $0<c_i<C_i$, the matrix $\hat{A}$ defined by $\hat{a}_{ii}:=c_i a_{ii}$ and  $\hat{a}_{ij} = C_i a_{ij}$ for $i\neq j$ is Hurwitz. Then the solution to the Cauchy problem \eqref{Equations}-\eqref{InitialConditions} is globally defined. Furthermore, the solution $\rho^\infty$ to $A \rho + I^\infty=0$ is GAS.
\end{theorem}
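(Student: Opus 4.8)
\emph{Proof proposal.} The plan is to decouple the asymptotics species by species, reduce each scalar problem to Theorem~\ref{N=1}, and close the resulting system of inequalities using that the hypothesis makes the relevant matrices of $M$-matrix type. I would take the explicit constants to be $c_i := \min_{X_i} d_i > 0$ and $C_i := \max_{X_i} d_i$, so that $\hat A$ has positive off-diagonal entries $C_i a_{ij}$ (mutualism), hence is Metzler; being Hurwitz it then has strictly negative diagonal, so necessarily $a_{ii}<0$. First I would establish the a priori bound giving global existence: integrating \eqref{Equations} in $x$ and estimating $d_i(x)$ by $c_i$ on the stabilising self term $a_{ii}\rho_i<0$ and by $C_i$ on the positive mutualistic terms $a_{ij}\rho_j$, one gets $\dot\rho_i \le (\, \max_{x\in X_i} r_i(x) + (\hat A\rho)_i\,)\rho_i$. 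The right-hand side is cooperative (its off-diagonal coefficients $C_i a_{ij}\rho_i$ are nonnegative for $\rho\ge 0$), so the comparison principle for cooperative ODE systems yields $\rho(t)\le z(t)$, where $z$ solves the classical Lotka--Volterra system \eqref{ClassicalLV} with matrix $\hat A$ and growth rates $\max_{X_i} r_i$, started above $\rho(0)$. Since $\hat A$ is Hurwitz and Metzler there is $v>0$ with $\hat A v<0$, and a large multiple $\overline z = tv$ is then a constant super-solution dominating $z(0)$, so $z$ stays bounded; hence $\rho$ is bounded and Theorem~\ref{ExistenceUniqueness} gives a global solution.

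The core step is an asymptotic sandwich. Set $\underline\rho_i := \liminf_{t\to\infty} \rho_i(t)$ and $\overline\rho_i := \limsup_{t\to\infty} \rho_i(t)$, both finite by the above. Fixing $i$, write the $i$-th equation as $\partial_t n_i = (\, r_i(x) + d_i(x) g_i(t) - b_i(x)\rho_i(t)\,)n_i$ with $g_i(t):=\sum_{j\ne i} a_{ij}\rho_j(t)\ge 0$ and $b_i := -a_{ii} d_i>0$; using $a_{ij}>0$, for every $\varepsilon>0$ and all large $t$ one has $g_i(t)\le \overline g_i+\varepsilon$ with $\overline g_i:=\sum_{j\ne i} a_{ij}\overline\rho_j$. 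I would compare $n_i$ with the solution of the \emph{scalar} equation obtained by freezing $g_i$ at the constant $\overline g_i+\varepsilon$, started at a large time $T$ from the same datum $n_i(T,\cdot)$; call $\sigma_i(t)$ its total mass. The comparison principle here follows by studying the ratio $w := (\text{frozen solution})/n_i>0$: it factors as $w(t,x)=\exp(d_i(x)Q(t))$ with $Q(T)=0$ and $Q'=(\overline g_i+\varepsilon-g_i) - |a_{ii}|(\sigma_i-\rho_i)$, and since $Q(t)=0$ forces $\sigma_i(t)=\rho_i(t)$ (as $d_i>0$) and then $Q'(t)\ge 0$, one gets $Q\ge 0$, i.e. $\rho_i\le\sigma_i$ on $[T,\infty)$. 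Theorem~\ref{N=1} applied to the frozen equation (whose net rate $r_i+d_i(\overline g_i+\varepsilon)$ is positive) gives $\sigma_i(t)\to \frac{1}{-a_{ii}}(I_i^\infty+\overline g_i+\varepsilon)$; letting $\varepsilon\to0$, $\overline\rho_i\le\frac{1}{-a_{ii}}(I_i^\infty+\sum_{j\ne i} a_{ij}\overline\rho_j)$. The symmetric estimate from below gives $\underline\rho_i\ge\frac{1}{-a_{ii}}(I_i^\infty+\sum_{j\ne i} a_{ij}\underline\rho_j)$ (here I also use $r_i+d_i(\underline g_i-\varepsilon)>0$ for $\varepsilon$ small, which holds since $r_i>0$). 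Recalling $A\rho^\infty+I^\infty=0$, these read exactly
\[ A(\underline\rho-\rho^\infty)\le 0\le A(\overline\rho-\rho^\infty). \]

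To conclude I would check that $\hat A$ Hurwitz implies $A$ Hurwitz: $A$ is Metzler and dominated entrywise by $D_C^{-1}\hat A$ with $D_C=\mathrm{diag}(C_i)$ (equal off-diagonal, and $c_i a_{ii}/C_i\ge a_{ii}$ on the diagonal since $c_i\le C_i$, $a_{ii}<0$), while $D_C^{-1}\hat A$ is still Hurwitz and Metzler (it inherits a vector $v>0$ with image $<0$); monotonicity of the spectral abscissa on Metzler matrices then gives $s(A)\le s(D_C^{-1}\hat A)<0$. Hence $-A$ is a nonsingular $M$-matrix, $-A^{-1}\ge 0$, and applying the order-reversing map $y\mapsto A^{-1}y$ to the two displayed inequalities yields $\overline\rho\le\rho^\infty$ and $\underline\rho\ge\rho^\infty$; combined with $\underline\rho\le\overline\rho$ this forces $\rho(t)\to\rho^\infty$. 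Since nothing but \eqref{DataRegularity} was used on the data, $\rho^\infty$ is GAS, and the concentration statement then follows from the Proposition of Section~\ref{Section2}.

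The obstacles I expect are twofold. First, the comparison principle for the non-local logistic equation: the workable point is that both the non-locality and the frozen source enter proportionally to $d_i(x)>0$, so the ratio of two solutions factors through a single function of time and the sign argument closes — but this must be set up with care, in particular tracking that the data of the frozen scalar equations inherit enough regularity from \eqref{DataRegularity} for Theorem~\ref{N=1} to genuinely apply. Second, the per-species estimates of the sandwich couple all the $\limsup$'s and $\liminf$'s simultaneously; they only disentangle because the Hurwitz/$M$-matrix structure encoded in the hypothesis makes the map $\rho\mapsto(-a_{ii})^{-1}(I_i^\infty+\sum_{j\ne i} a_{ij}\rho_j)$ a monotone contraction with unique fixed point $\rho^\infty$, and pinning down exactly which condition on $A$ (and thus which $c_i<C_i$) is needed for this is the delicate part of the statement.
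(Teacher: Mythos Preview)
Your approach is correct and genuinely different from the paper's. The paper proves GAS by showing that each $\rho_i$ is $BV$ on $[0,+\infty)$: it differentiates $q_i:=\dot\rho_i$, derives a cooperative differential inequality for the vector $\big((q_i)_-\big)_i$, and compares with the linear system $\dot y=(\hat A+\varepsilon J)y$ to get exponential decay of the negative parts. This is why the paper needs the particular constants $c_i=d_i^m\rho_i^m$ and $C_i=d_i^M\rho_i^M$, since the factor $b_i(t)=\int d_i n_i$ in front of $(A(q)_-)_i$ lies asymptotically between $d_i^m\rho_i^m$ and $d_i^M\rho_i^M$. Your route is instead a monotone sandwich: freeze the mutualistic forcing at its $\limsup$ (resp.\ $\liminf$), reduce to the scalar Theorem~\ref{N=1}, and close the system of inequalities on $(\underline\rho,\overline\rho)$ via the $M$-matrix property of $-A$. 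The comparison with the frozen scalar equation works precisely because both the coupling $g_i$ and the self-term $a_{ii}\rho_i$ enter through the \emph{same} factor $d_i(x)$, so the log-ratio factors as $d_i(x)Q(t)$ and the sign of $Q$ is governed by a scalar ODE with $Q'\big|_{Q=0}\geq 0$; your invariance claim can be made rigorous by the standard scalar comparison principle (the right-hand side $h(t)-|a_{ii}|\int n_i(e^{d_iQ}-1)$ is locally Lipschitz in $Q$ once $\rho$ is bounded). Two payoffs of your approach are worth noting. First, your choice $c_i=d_i^m$, $C_i=d_i^M$ coincides with the paper's matrix $\tilde A$, which the paper only uses for global boundedness; thus your argument actually yields GAS under the \emph{weaker} hypothesis ``$\tilde A$ Hurwitz'' (indeed the sandwich step itself uses only that $A$ is Metzler and Hurwitz, which follows from $\tilde A$ Hurwitz). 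Second, you avoid the Lipschitz requirement~\eqref{Regularity} on $r_i,d_i$ that the Lyapunov proof needs, staying at the level of \eqref{DataRegularity}. What the paper's $BV$ method buys in return is a quantitative exponential decay of $(\dot\rho_i)_-$, whereas your sandwich gives no rate.
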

Again, this applies to the case of interspecific interactions being higher that intraspecific ones, because a Hurwitz matrix is a matrix such that all its eigenvalues have negative real part and it has to do with diagonally dominant matrices (see Section \ref{Section4}). \par
Because of the hypothesis of mutualism, the system is cooperative, and sub and supersolution techniques can succeed. More precisely, it is possible to prove that all functions $\rho_i$ are BV on $[0, +\infty)$ and this implies their convergence.

\section{General interactions}
\label{Section3}
\subsection{Proof of the main theorem}
In this section, we need slightly more regularity for the data, namely that the functions are Lipschitz continuous: 
\begin{equation}
\label{Regularity}
\text{for } i= 1, \ldots, N,  \; r_i, d_i \in{C^{0,1}(X_i)}.
\end{equation}
We can now restate the first theorem: 
\smallskip
\begin{theorem}\label{GAS_Again}
Assume \eqref{ExistenceSteadyState} and \eqref{Regularity}.
Assume that there exists a diagonal matrix $D>0$ such that $D A$ is symmetric and $DA<0$. Then the solution to the Cauchy problem \eqref{Equations}-\eqref{InitialConditions} is globally defined. \par
Furthermore, the solution $\rho^\infty$ to $A \rho + I^\infty=0$ is GAS with
\begin{equation}
\label{SpeedConvergence}
\rho(t) - \rho^\infty = \mathrm{O}\left(\left(\frac{\ln(t)}{t}\right)^{\frac{1}{2}}\right).
\end{equation}
Concentration of a given $n_i$ occurs at speed $\mathrm{O}\left(\frac{\ln(t)}{t}\right)$, in the following sense: 
\begin{equation}
\label{SpeedConcentration}
\int_{X_i}  m_i(x) R_i\left(x,\rho_1^{\infty}, \ldots, \rho_N^{\infty}\right) n_i(t,x) \, dx  =  \mathrm{O}\left(\frac{\ln(t)}{t}\right). \\
\end{equation}
\noindent
In particular, if $K_i$ is reduced to a singleton $x_i^\infty$, then 
\begin{equation}
\forall \varepsilon>0, \; \; \int_{X_i \setminus B\left(x_i^\infty, \varepsilon\right)} n_i(t,x) \, dx  = \mathrm{O}\left(\frac{\ln(t)}{t}\right).
\end{equation}
\end{theorem}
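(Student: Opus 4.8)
The plan follows the remark made just before the statement: one combines the Goh Lyapunov functional for the ODE system with a Jabin--Raoul-type~\cite{Jabin2011} functional at the level of the densities, the link between the two being the ``separated'' identity obtained by dividing \re{Equations} by $d_i n_i$,
\[
\f{\p_t n_i}{d_i(x)\,n_i}=\f{r_i(x)}{d_i(x)}+\sum_{j=1}^N a_{ij}\rho_j(t).
\]
Write $D=\mathrm{diag}(\delta_1,\dots,\delta_N)$, $\delta_i>0$, and $b_i:=\sum_j a_{ij}(\rho_j-\rho_j^\infty)$, so that $R_i(x,\rho)=R_i(x,\rho^\infty)+d_i(x)b_i$ with $R_i(\cdot,\rho^\infty)=r_i-d_i I_i^\infty\le 0$ and $=0$ exactly on $K_i$. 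I would work with $\mathcal L:=\mathcal I+\mu\,\mathcal G$ for a small $\mu>0$, where $\mathcal G(t):=\sum_i\delta_i\big(\rho_i(t)-\rho_i^\infty\ln\rho_i(t)\big)$ is Goh's functional and
\[
\mathcal I(t):=-\sum_{i=1}^N \delta_i\int_{X_i}\f{r_i(x)}{d_i(x)}\,n_i(t,x)\,dx-\tfrac12\,\rho(t)^{T}DA\,\rho(t).
\]
The identity above yields $\dot{\mathcal I}=-\sum_i\delta_i\int_{X_i}\f{R_i(x,\rho)^2}{d_i}\,n_i\,dx\le 0$ (using $DA$ symmetric), and $A\rho^\infty=-I^\infty$ lets one rewrite $\mathcal I$, up to an additive constant, as
\[
\mathcal I(t)=\underbrace{\sum_{i=1}^N\delta_i\int_{X_i}\f{-R_i(x,\rho^\infty)}{d_i(x)}\,n_i(t,x)\,dx}_{=:\,P(t)\,\ge\,0}+\tfrac12\,(\rho(t)-\rho^\infty)^{T}(-DA)(\rho(t)-\rho^\infty),
\]
a sum of two nonnegative terms precisely because $DA<0$; here $P(t)$ is, up to sign and the weight $m_i=\delta_i/d_i$, the quantity in \re{SpeedConcentration}.

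\emph{Global existence.} On the maximal interval of existence one checks $\dot{\mathcal L}\le 0$: expanding $R_i(x,\rho)^2=R_i(\cdot,\rho^\infty)^2+2d_ib_iR_i(\cdot,\rho^\infty)+d_i^2b_i^2$ in $-\dot{\mathcal I}$ and, for $\mu$ small, absorbing into it the indefinite part of $\dot{\mathcal G}$ (which is $\lesssim|\rho-\rho^\infty|^2+|\rho-\rho^\infty|\,P$). Since $\mathcal L$ is proper and coercive on $\mathbb{R}^N_{>0}$ --- $\rho_i-\rho_i^\infty\ln\rho_i\to+\infty$ both as $\rho_i\to 0^+$ and $\rho_i\to+\infty$ --- the bound $\mathcal L(t)\le\mathcal L(0)$ confines $\rho$ to a fixed compact subset of $\mathbb{R}^N_{>0}$. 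This provides the a priori estimate $\rho\le\rho^{sup}$ required by Theorem~\ref{ExistenceUniqueness}, hence a global solution, and also $\rho_i(t)\ge\rho_{\min}>0$, which keeps $\ln\rho_i$ and $1/n_i$ harmless for all times.

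\emph{Convergence and concentration.} As $\mathcal L$ is non-increasing and bounded below, $-\dot{\mathcal L}$ is integrable on $[0,+\infty)$; in particular $\int_0^{+\infty}\sum_i\delta_i\int_{X_i}\f{R_i(x,\rho)^2}{d_i}n_i\,dx\,dt<+\infty$, and since $|\dot\rho_i|\le\big(\int_{X_i}\f{R_i(x,\rho)^2}{d_i}n_i\big)^{1/2}\big(\int_{X_i}d_in_i\big)^{1/2}$ with $\rho$ bounded, $\dot\rho\in L^2(0,+\infty)$. One then runs an $\omega$-limit / LaSalle-type argument: any weak-$\ast$ limit point $(n_i^{\infty})$ must be a steady state, so its totals $\rho_\ast$ satisfy $R_i(\cdot,\rho_\ast)=0$ on $\mathrm{supp}\,n_i^{\infty}\neq\emptyset$ (nonempty because $\rho_{i,\ast}\ge\rho_{\min}$), forcing $A(\rho_\ast-\rho^\infty)\ge 0$ componentwise and $\rho$ constant along the flow on the limit set. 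The \emph{spurious} possibility $\rho_\ast\neq\rho^\infty$ is ruled out using $n_i^0>0$ together with the blow-up dichotomy of Section~\ref{Section2}: if $\rho_\ast\neq\rho^\infty$, then at a phenotype where $r_i/d_i$ attains its maximum the asymptotic growth rate $R_i(\cdot,\rho_\ast)$ is strictly positive, which with $n_i^0>0$ would make $\rho_i$ blow up, contradicting boundedness. Hence $\rho(t)\to\rho^\infty$ and $P(t)\to 0$, i.e.\ $n_i(t)$ concentrates on $K_i$.

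\emph{Rates, and the main difficulty.} Turning this into the explicit rates \re{SpeedConvergence}--\re{SpeedConcentration} is the delicate point, and I expect most of the work to be here. The obstruction is that the dissipation controls only \emph{quadratic} quantities: expanding and using Cauchy--Schwarz one gets $-\dot{\mathcal L}\gtrsim|\rho-\rho^\infty|^2-C\,P$, but never $P$ itself linearly, so $\mathcal L$ is not comparable to its dissipation and exponential decay is out of reach. The route is to combine the Lipschitz regularity \re{Regularity} (to bound $\dot P$ and $\f{d}{dt}|\rho-\rho^\infty|^2$, keeping the relevant quantities from varying too fast) with the integrability of the dissipation, so as to obtain a closed pair of differential inequalities for $P$ and $|\rho-\rho^\infty|^2$ of the type $\dot u\le -c\,u^2+(\text{lower order, }L^1\text{ in }t)$; integrating it yields $P(t)+|\rho(t)-\rho^\infty|^2=\mathrm{O}(\ln t/t)$, the logarithmic loss relative to a clean $1/t$ being the usual signature of such a quadratic Gronwall estimate with an $L^1$ forcing term. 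Then \re{SpeedConvergence} follows by taking square roots and \re{SpeedConcentration} is the decay of $P$; finally, if $K_i=\{x_i^\infty\}$, compactness gives $-R_i(\cdot,\rho^\infty)\ge c_\e>0$ on $X_i\setminus B(x_i^\infty,\e)$, whence $\int_{X_i\setminus B(x_i^\infty,\e)}n_i(t,x)\,dx\le c_\e^{-1}\int_{X_i}(-R_i(x,\rho^\infty))\,n_i(t,x)\,dx=\mathrm{O}(\ln t/t)$.
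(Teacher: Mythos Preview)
Your functional $\mathcal I$ is a genuinely good idea, and more than you realise: up to an additive constant and a factor $-\tfrac12$, your $\mathcal I$ is \emph{exactly} the auxiliary quantity $G$ of the paper. Indeed, with the paper's notation $B_i=\int_{X_i} m_i R_i(\cdot,\rho^\infty)n_i$, $u=\rho-\rho^\infty$, $M=A^TD+DA=2DA$, your rewriting gives
\[
\mathcal I(t)-\text{const}=P(t)-\tfrac14 u^T M u=-\sum_i\delta_i B_i-\tfrac14 u^T M u=-\tfrac12\Big(\tfrac12 u^TMu+2\sum_i\delta_iB_i\Big)=-\tfrac12\,G(t),
\]
so your computation $\dot{\mathcal I}\le 0$ is precisely the paper's key step ``$G$ is non-decreasing'' (and uses the symmetry of $DA$ in the same way). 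Your a priori bound $\rho$ bounded follows at once from $\mathcal I$ non-increasing and bounded below, so the addition of $\mu\,\mathcal G$ is unnecessary for global existence.

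The gap is in the rates. Your ``quadratic Gronwall with an $L^1$ forcing'' sketch is not how the $\ln t/t$ arises, and I do not see how to close it along those lines: the dissipation $-\dot{\mathcal I}=\sum_i\delta_i\int R_i^2 n_i/d_i$ is integrable but is \emph{not} comparable to $\mathcal I-\text{const}$ itself (it is quadratic in $R_i$, not linear), so no autonomous differential inequality for $\mathcal I$ drops out. What is missing is a \emph{second} functional, at the density level,
\[
V(t)=\sum_i\delta_i\int_{X_i}\frac{1}{d_i(x)}\Big[n_i^\infty(x)\ln\!\frac{1}{n_i(t,x)}+n_i(t,x)-n_i^\infty(x)\Big]dx,
\]
with $n_i^\infty$ any measure supported on $K_i$ of mass $\rho_i^\infty$. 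A direct computation gives $\dfrac{dV}{dt}=u^T(DA)u+\sum_i\delta_iB_i=\tfrac12 u^TMu+\sum_i\delta_iB_i\le\tfrac12 G(t)$. Since $G$ is non-decreasing (your $\dot{\mathcal I}\le 0$), integrating yields $V(t)-V(0)\le\tfrac{t}{2}G(t)$. Now the Lipschitz assumption \re{Regularity} enters: it gives the pointwise bound $n_i(t,x)\le Ct$ uniformly in $x$ (compare growth rates at neighbouring phenotypes), hence $V(t)\ge -C(\ln t+1)$. Combining,
\[
0\ge G(t)\ge \frac{2(V(t)-V(0))}{t}\ge -C\,\frac{\ln t+1}{t},
\]
which is exactly \re{SpeedConvergence}--\re{SpeedConcentration} once you unpack $G$ into its two non-positive summands. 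Your final paragraph (bounding the mass outside $B(x_i^\infty,\e)$ by $c_\e^{-1}P$) is then correct. Note also that with this argument the LaSalle/$\omega$-limit detour is unnecessary: convergence of $\rho$ and concentration of $n_i$ are direct consequences of $G(t)\to 0$, and your invariance argument, as written, does not obviously yield that weak-$\ast$ cluster points of $(n_i(t,\cdot))$ are steady states.
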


\begin{proof}

\textit{First step: definition of the Lyapunov functional.} 
\par
For $i=1, \ldots, N$, we choose any measure $n_i^\infty$ in $\mathcal{M}^1(X_i)$ satisfying $n_i^\infty (X_i) = \rho_i^\infty$, which is furthermore concentrated on $K_i$, \textit{i.e.}, 
\begin{equation}
\label{Support}
\text{supp}(n_i^\infty) \subset{K_i}.
\end{equation}
We abusively write integration of functions $g$ against measures $\mu$ as $\int_X g(x) \mu(x) \, dx$. We also set $m_i:= \frac{1}{d_i}$ and define $N$ functions $V_i$ by
\[
V_i(t):=  \int_{X_i} m_i (x) \left[ n_i^\infty(x) \ln\left(\dfrac{1}{n_i(t,x)}\right) + \left( n_i(t,x) - n_i^\infty(x) \right) \right] \, dx.
\]
In what follows, we consider the following Lyapunov functional:
\[
V(t) := \sum_{i=1}^N \lambda_i V_i(t)
\]
where the positive constants $\lambda_i$ are to be chosen later. The diagonal matrix of diagonal entries $\lambda_1, \ldots, \lambda_N$ is denoted by $D$.  \par
\textit{Second step: computation and sign of the derivative.} 
\par
For any $i$, we compute
\begin{align*}
\dfrac{dV_i}{dt} & =  \int_{X_i} m_i(x) \left[- n_i^\infty(x) \dfrac{\partial_t n_i(t,x)}{n_i(t,x)} + \partial_t n_i(t,x) \right] \, dx \\ 
			 & = \int_{X_i} m_i(x) \, R_i\left(x,\rho_1, \ldots, \rho_N\right)  \left[n_i(t,x)- n_i^\infty(x) \right] \, dx\\
			 & = \int_{X_i} m_i(x) \, \left(R_i\left(x,\rho_1, \ldots, \rho_N\right)-R_i\left(x,\rho_1^\infty, \ldots, \rho_N^\infty\right)\right)  \left[n_i(t,x)- n_i^\infty(x) \right] dx\\
			 &  \hspace{12em} + \int_{X_i} m_i(x) \, R_i\left(x,\rho_1^\infty, \ldots, \rho_N^\infty\right)  \left[n_i(t,x)- n_i^\infty(x) \right] \, dx.
\end{align*}
The definition of $m_i$ implies that the first term simplifies as follows
\begin{align*}
\int_{X_i} m_i(x) d_i(x) \, \left[ A\left( \rho-\rho^\infty\right) \right]_i  \left[n_i(t,x)- n_i^\infty(x) \right]  \, dx
  =  \left[ A\left( \rho-\rho^\infty\right) \right]_i  \left( \rho_i - \rho_i^\infty \right).
\end{align*}
For the second term, the choice \eqref{Support} leads to 
\[
B_i(t):= \int_{X_i} m_i(x) \, R_i\left(x,\rho_1^\infty, \ldots, \rho_N^\infty\right)  n_i(t,x) \, dx.\]
The functions $B_i$ are all non-positive by definition of $\rho^\infty$.

Defining the vector $u:= \rho - \rho^\infty$, we arrive at: 
\begin{align*}
\dfrac{dV}{dt} & = 
\sum_{i=1}^N \lambda_i \left[ A\left( \rho-\rho^\infty\right) \right]_i  \left( \rho_i - \rho_i^\infty \right) + \sum_{i=1}^N \lambda_i B_i \\
& = \sum_{i=1}^N  \lambda_i (Au)_i u_i  + \sum_{i=1}^N \lambda_i B_i.
\end{align*}
Thus, we end up with the expression 
\begin{equation}
\label{Derivative}
\dfrac{dV}{dt} = 
u^T (D A) u + \sum_{i=1}^N \lambda_i B_i.
\end{equation}
Since the antisymmetric part of $DA$ does not play any role, this can also be expressed 
\begin{equation*}
\dfrac{dV}{dt} = 
\dfrac{1}{2} u^T (A^T D + DA) u + \sum_{i=1}^N \lambda_i B_i.
\end{equation*}
Thus, we start by assuming that $M:=A^T D + D A<0$ to ensure that $\dfrac{d V}{dt}\leq 0$ and that the convergence of the term $u^T M u$ to $0$ is equivalent to that of $\rho$ to $\rho^\infty$. However, we do not have the usual property $V\geq0$ for Lyapunov functions, so that we cannot yet conclude.

\textit{Third step: estimates on $\frac{dV}{dt}$.} \par
Let 
\[
G:= \frac{1}{2} u^T M u +  2 \sum_{i=1}^N \lambda_i B_i. 
\]
We are going to show that $G$ is non-decreasing.

We denote by $\left\langle u,v\right\rangle$ the canonical scalar product of two vectors $u$ and $v$ in $\mathbb{R}^N$.
\begin{align*}
\dfrac{d}{dt} \left( u^T (D A) u \right) & = \dfrac{d}{dt} \left\langle(DA)u, u \right\rangle \\
 & =  \; \left\langle(DA)\dfrac{du}{dt}, u \right\rangle+ \left\langle(DA)u,\dfrac{du}{dt} \right\rangle.
\end{align*}
For $i=1, \ldots, N$, the derivative of $B_i$ is given by 
\begin{align*}
\dfrac{dB_i}{dt} &= \int_{X_i} m_i(x) R_i\left(x,\rho_1^\infty, \ldots, \rho_N^\infty\right) R_i\left(x,\rho_1, \ldots, \rho_N\right) n_i(t,x) \, dx \\
& =  \int_{X_i} m_i(x) R_i^2\left(x,\rho_1, \ldots, \rho_N\right) n_i(t,x) \, dx \\
&  + \int_{X_i} m_i(x) \left[R_i\left(x,\rho_1^\infty, \ldots, \rho_N^\infty\right) - R_i\left(x,\rho_1, \ldots, \rho_N\right) \right] R_i\left(x,\rho_1, \ldots, \rho_N\right)  n_i(t,x) \, dx  \\
& \geq \big[A \left(\rho^\infty- \rho\right)\big] \int_{X_i} R_i\left(x,\rho_1, \ldots, \rho_N\right)  n_i(t,x) \, dx \\ 
& = - (Au)_i \left(\dfrac{du}{dt}\right)_i
\end{align*}
leading to the bound
\begin{align*}
\dfrac{d}{dt} \left(\sum_{i=1}^N \lambda_i B_i \right)  &\geq -\sum_{i=1}^N \lambda_i (Au)_i \left(\dfrac{du}{dt}\right)_i    \\
&  = - \left\langle (DA) u, \dfrac{du}{dt} \right\rangle.
\end{align*}
Put together, these estimates yield:
\begin{align*}
\dfrac{dG}{dt} &\geq \; \left\langle(DA)\dfrac{du}{dt}, u \right\rangle + \left\langle(DA)u,\dfrac{du}{dt} \right\rangle -2 \left\langle (DA) u, \dfrac{du}{dt} \right\rangle   \\
& = \, \left\langle(DA)\dfrac{du}{dt}, u \right\rangle -  \left\langle(DA)u,\dfrac{du}{dt} \right\rangle.
\end{align*}
The last expression is equal to $0$ if $DA$ is symmetric, in which case $G$ is non-decreasing as claimed.\par
The assumptions that $DA$ is symmetric and that $A^T D + D A<0$ are equivalent to the assumption made for the theorem: $DA$ is supposed to be a symmetric negative definite matrix. 
\par 
As a consequence of the monotonicity of $G$, we get $ u^T (- DA) u  \leq - G(0)$ for all $t$. The left-hand side is the square of some norm on $\mathbb{R}^N$, which means that $\rho$ has to be bounded: these a priori bounds ensure the global definition of the solution to \eqref{Equations}-\eqref{InitialConditions}. 

\textit{Fourth step: a lower estimate for $V$.} 
\par
To estimate $V$ from below, we need a uniform (in $x$) upper bound on the densities $n_i$. Because of the regularity assumption \eqref{Regularity} , there exists $C>0$ such that: 
$$\forall{i=1, \ldots, N}, \; \; \; \forall (x,y) \in{X_i^2}, \; \; R_i\left(y, \rho_1, \ldots, \rho_N\right) \geq R_i\left(x, \rho_1, \ldots, \rho_N\right)- C |x-y|.$$
The constant $C$ can be chosen to be independent of $t$ since the functions $\rho_i$ are bounded. \\
This implies for a given $i$
\begin{align*}
\int_{X_i} n_i(t,y) \, dy &= \int_{X_i} n_i^0(y) \exp\left(\int_0^t R_i\left(y, \rho_1, \ldots, \rho_N\right) \, ds \right) \, dy \\
& \geq \int_{X_i} \dfrac{n_i^0(y)}{n_i^0(x)} \left(n_i^0(x) \exp\left(\int_0^t R_i\left(x, \rho_1, \ldots, \rho_N\right)\right)\right)\exp \left(- Ct |x-y|\right) \, dy \\
& \geq \dfrac{n_i(t,x)}{n_i^0(x)} \int_{X_i} \exp \left(- Ct |x-y|\right) \,dy.
\end{align*}
Computing the integral, we write, thanks to the boundedness of $\rho_i$ $n_i^0$ and  ($C$ has changed and is independent of $t$ and $x$): 
for $t$ large enough, $n_i(t,x) \leq C t $. 
The bound on $V$ follows immediately: 
\begin{equation}
\label{BoundV}
V(t) \geq -C \left (\ln(t) + 1\right).
\end{equation}

\textit{Fifth step: convergence.} 
\par
$G$ bounds $\frac{dV}{dt}$ from above: $\frac{dV}{dt} \leq \frac{1}{2} G$.
Thus $$V(t) - V(0) \leq \frac{1}{2} \int_0^t G(s) \, ds  \leq \frac{1}{2} t G(t)$$ thanks to the third step. 
We can now write $G(t) \geq - C \frac{\ln(t) + 1}{t}$: $G=\mathrm{O}\left(\frac{\ln(t)}{t}\right)$, consequently, each non-positive term it is composed of also vanishes like $\mathrm{O}\left(\frac{\ln(t)}{t}\right)$ as $t\rightarrow +\infty$. \par 
In other words, $ \frac{1}{2} u^T M u$ and each $B_i$ onverge to $0$ as as well $\mathrm{O}\left(\frac{\ln(t)}{t}\right)$. This is nothing but the two first statements \eqref{SpeedConvergence} and \eqref{SpeedConcentration}. \par

For the last statement, we fix $i$ and $\varepsilon>0$. We denote $h_i:=- m_i R_i\left(\cdot, \rho_1^{\infty}, \ldots, \rho_N^\infty\right)$, which is non-negative on $X_i$, and by assumption vanishes at $x_i^\infty$ only. We choose $a>0$ small enough such that 
$a \ind{X_i \setminus B\left(x_i^\infty, \varepsilon\right)} \leq h$ on $X_i$. This enables us to write 
$$ \int_{X_i\setminus B\left(x_i^\infty, \varepsilon\right)} n_i(t,x) \, dx \leq \dfrac{1}{a} \int_{X_i}  m_i(x) R_i\left(x, \rho_1^{\infty}, \ldots, \rho_N^\infty\right) n_i(t,x) \, dx = \mathrm{O}\left(\frac{\ln(t)}{t}\right). $$
\end{proof}

\begin{remark}
The first interesting fact is that the convergence rate of $G$ to $0$, in $\mathrm{O}\left(\frac{\ln(t)}{t}\right)$, is almost optimal in many cases.
Indeed, if the sets $K_i$ are reduced to singletons, there cannot exist any $\alpha>1$ such that this sum vanishes like $\mathrm{O}\left(\frac{1}{t^\alpha}\right)$. This comes from the fact that if it were true, $\frac{d V}{dt}$ would be integrable on the half-line, which would imply the convergence of $V$. This is not possible since each $V_i$ has to go to $-\infty$ as $t$ goes to $+\infty$. \par
This might seem contradictory with the exponential convergence rates obtained in~\cite{Coville2013} for some classical Lotka-Volterra equations, but the Lyapunov functional gives us information on the speed of both phenomena in the sense defined above (through the function $G$) and it does not say whether one of the two is faster.
\end{remark}

\subsection{Sharpness in dimension $2$}
It is clear that if we can find $D>0$ diagonal such that $DA$ is symmetric and $D A<0$, then $A^T D + D A<0$.
The condition that $DA$ should be symmetric is constraining, especially if $N\geq 3$ in which case it imposes some polynomial equalities on the coefficients of the matrix $A$. In dimension $2$, however, the result is sharp in various contexts, as stated in the following proposition.
\smallskip
\begin{proposition}
\label{Sharp}
Assume $N=2$, $a_{11}<0$, $ a_{22} < 0$ and $a_{12}\,a_{21} > 0$. 
Then the following conditions are equivalent. \par
(i) there exists $D>0$ diagonal such that $DA$ is symmetric and $D A<0$; \par 
(ii) there exists $D>0$ diagonal such that $A^T D + D A<0$; \par 
(iii) $\det(A) > 0$.
\end{proposition}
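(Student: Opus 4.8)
The plan is to establish the cycle of implications $(iii)\Rightarrow(i)\Rightarrow(ii)\Rightarrow(iii)$, the middle one being the elementary observation already recorded before the proposition and the other two being short computations specific to the $2\times 2$ case. The first thing I would note is that, under the standing hypothesis $a_{12}a_{21}>0$ (which in particular forces $a_{12}\neq 0$ and $a_{21}\neq 0$), there always exists a diagonal matrix $D=\mathrm{diag}(d_1,d_2)$ with $d_1,d_2>0$ making $DA$ symmetric: symmetry of $DA$ is equivalent to the single scalar equation $d_1 a_{12}=d_2 a_{21}$, which merely prescribes the positive ratio $d_1/d_2=a_{21}/a_{12}>0$. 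For such a $D$, the matrix $DA$ is symmetric with negative diagonal entries $d_1 a_{11}$ and $d_2 a_{22}$, so by the usual $2\times 2$ criterion (a symmetric matrix is negative definite iff its $(1,1)$ entry is negative and its determinant positive) it is negative definite if and only if $\det(DA)=d_1 d_2\det(A)>0$, i.e. if and only if $\det(A)>0$. This gives $(iii)\Leftrightarrow(i)$, and $(i)\Rightarrow(ii)$ is then immediate since $A^T D+DA=(DA)^T+DA=2\,DA$ whenever $DA$ is symmetric.

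For the remaining implication $(ii)\Rightarrow(iii)$, I would assume $D=\mathrm{diag}(d_1,d_2)>0$ makes $S:=A^T D+DA$ negative definite and argue directly with the determinant. Writing $S$ out, its diagonal entries are $2d_1 a_{11}$ and $2d_2 a_{22}$ and its off-diagonal entry is $d_1 a_{12}+d_2 a_{21}$, so negative definiteness forces $\det(S)=4 d_1 d_2 a_{11}a_{22}-(d_1 a_{12}+d_2 a_{21})^2>0$. Combining this with the inequality $(d_1 a_{12}+d_2 a_{21})^2\geq 4 d_1 d_2 a_{12}a_{21}$ --- which follows by expanding the square and using $a_{12}a_{21}>0$ --- and dividing by $4d_1 d_2>0$ yields $a_{11}a_{22}>a_{12}a_{21}$, that is $\det(A)>0$. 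A more conceptual alternative for this step: $S<0$ with $D>0$ means (up to the congruence by $D^{-1/2}$) that the symmetric part of $D^{1/2}AD^{-1/2}$ is negative definite, so every eigenvalue of $A$, which is similar to $D^{1/2}AD^{-1/2}$, has negative real part; for a real $2\times 2$ matrix this forces $\det(A)$, the product of the eigenvalues, to be positive (two negative reals, or a conjugate pair with positive squared modulus).

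I do not expect a genuine obstacle here; the only care needed is in the sign bookkeeping of the $2\times 2$ definiteness criterion and of the inequality $(d_1 a_{12}+d_2 a_{21})^2\geq 4 d_1 d_2 a_{12}a_{21}$, and in pinning down exactly where the hypothesis is used as $a_{12}a_{21}>0$ rather than merely $a_{12}a_{21}\neq 0$: it enters both when solving $d_1 a_{12}=d_2 a_{21}$ with positive unknowns and when signing the cross term $2d_1 d_2 a_{12}a_{21}$ in that inequality. The hypotheses $a_{11}<0$ and $a_{22}<0$ are likewise used only through the negativity of the diagonal entries of $DA$ and of $S$.
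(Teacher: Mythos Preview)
Your proof is correct and follows essentially the same route as the paper: the cycle $(i)\Rightarrow(ii)\Rightarrow(iii)\Rightarrow(i)$ with the explicit $2\times2$ computation of $\det(A^T D+DA)$ for $(ii)\Rightarrow(iii)$ and the construction of a symmetrising $D$ via the ratio $d_1/d_2=a_{21}/a_{12}$ for $(iii)\Rightarrow(i)$. The only cosmetic differences are that the paper argues $(ii)\Rightarrow(iii)$ by contradiction using the identity $4\lambda_1\lambda_2 a_{12}a_{21}-(\lambda_1 a_{12}+\lambda_2 a_{21})^2=-(\lambda_1 a_{12}-\lambda_2 a_{21})^2$, whereas you use the equivalent AM--GM inequality directly, and your additional eigenvalue-based alternative for that step is a nice conceptual extra not present in the paper.
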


\begin{proof}
(i) implies (ii) as noticed before. \par
Now, let us assume (ii) and compute
$M:=A^T D + D A=$ 
$\begin{pmatrix}
 2  \lambda_1 a_{11} & \lambda_1 a_{12} + \lambda_2 a_{21}  \\
\lambda_1 a_{12} + \lambda_2 a_{21} & 2  \lambda_2 a_{22}  \\
 \end{pmatrix}$, which has positive determinant, \textit{i.e.}, $ \det(M)=4 \lambda_1 \lambda_2 a_{11}a_{22}- \left(\lambda_1 a_{12}+ \lambda_2 a_{21}\right)^2 > 0$. If $\det(A)\leq0$, 
 $\det(M) \leq 4 \lambda_1 \lambda_2 a_{12}a_{21}- \left(\lambda_1 a_{12}+ \lambda_2 a_{21}\right)^2 = - (\lambda_1 a_{12} - \lambda_2 a_{21})^2\leq 0$, a contradiction. \par
Now, if (iii) holds, we take $\lambda_1 := \frac{1}{|a_{12}|} $ and $\lambda_1 := \frac{1}{|a_{21}|} $ for which $D A=$ 
$\begin{pmatrix}
\frac{a_{11}}{|a_{12}|} & \sgn(a_{12})  \\
 \sgn(a_{21})  &  \frac{a_{22}}{|a_{21}|}  \\
 \end{pmatrix}$
is clearly symmetric negative definite, whence (i).
\end{proof}

\section{Cooperative case}
\label{Section4}
\subsection{Some facts about Hurwitz matrices}
We now focus on the cooperative case, \textit{i.e.}, on the case where all off-diagonal elements of $A$ are non-negative. We will also assume that the diagonal elements are negative, since otherwise blow-up clearly occurs: there is intra-spectific competition inside any given species. We will say that such a matrix is \textit{cooperative}. 
\par
In this case, we can hope for stronger results at the level of the integro-differential system because we can use sub and super-solution techniques. For our purpose, the following result on ODEs is sufficient.
\smallskip
\begin{lemma}
For $T>0$ (possibly $T=+\infty$), let $f: [0,T) \times \mathbb{R}^N \longrightarrow \mathbb{R}^N$ be a continous function on $[0,T)\times \mathbb{R}^N$, locally Lipschitz in $x\in{\mathbb{R}^N}$ uniformly in $t\in [0,T)$. Denoting $f(t,x):=\left(f_i(t, x_1, \ldots, x_N)\right)_{1 \leq i \leq N}$, further assume that for all $i=1, \ldots, N$, $f_i$ is non-decreasing with $x_j$ for all $j \neq i$.   \par
Assume that we have a solution $z$ on $[0,T)$ of the following Cauchy problem:
\begin{equation}
\begin{split}
     \dfrac{d z}{dt}& = f(t,z) \\
     z(0) &= z_0, 
\end{split}
\end{equation}
where $z_0\in{\mathbb{R}^N}$, and a function $y$ subsolution to the previous Cauchy problem, \textit{i.e.}, 
\begin{equation}
\begin{split}
     \dfrac{d y}{dt} & \leq f(t,y) \\
     y(0) & \leq z_0.
\end{split}
\end{equation} 
Then $y(t)\leq z(t)$ on $[0,T)$. 
\end{lemma}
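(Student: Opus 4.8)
The plan is to use the classical perturbation argument for quasi-monotone (cooperative) systems. For $\varepsilon>0$, set $e:=(1,\ldots,1)^T\in\mathbb{R}^N$ and consider the perturbed Cauchy problem
\[
\frac{dz_\varepsilon}{dt}=f(t,z_\varepsilon)+\varepsilon e,\qquad z_\varepsilon(0)=z_0+\varepsilon e,
\]
which, by the local Lipschitz assumption on $f$, has a unique maximal solution. The first step is to prove the \emph{strict} componentwise inequality $y(t)<z_\varepsilon(t)$ on the common interval of existence. Since $y(0)\leq z_0<z_0+\varepsilon e=z_\varepsilon(0)$, this holds near $t=0$; if it were to fail, let $t_1>0$ be the first time at which $y_i(t_1)=z_{\varepsilon,i}(t_1)$ for some index $i$. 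By minimality of $t_1$ and continuity, $y_j(t_1)\leq z_{\varepsilon,j}(t_1)$ for every $j$, with equality at $j=i$. Using first the subsolution inequality $y_i'\leq f_i(t,y)$, then the monotonicity of $f_i$ in the off-diagonal variables (raising $y_j$ to $z_{\varepsilon,j}$ for $j\neq i$ can only increase $f_i$, the $i$-th coordinate being kept fixed at the common value), one gets
\[
y_i'(t_1)\leq f_i(t_1,y(t_1))\leq f_i(t_1,z_\varepsilon(t_1))<f_i(t_1,z_\varepsilon(t_1))+\varepsilon=z_{\varepsilon,i}'(t_1).
\]
On the other hand $z_{\varepsilon,i}-y_i$ is positive on $[0,t_1)$ and vanishes at $t_1$, so its derivative at $t_1$ is $\leq 0$, i.e. $y_i'(t_1)\geq z_{\varepsilon,i}'(t_1)$ — a contradiction. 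Hence $y<z_\varepsilon$ wherever both are defined.

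Next I would pass to the limit $\varepsilon\to 0$. Fix any $t^*\in(0,T)$. Since both the right-hand side $f(t,x)+\varepsilon e$ and the initial datum $z_0+\varepsilon e$ converge, as $\varepsilon\to 0$, to those of the unperturbed problem, the standard continuous-dependence theorem for ODEs (again relying on the local Lipschitz bound, uniform in $t$) ensures that for $\varepsilon$ small enough $z_\varepsilon$ is defined on all of $[0,t^*]$ and $z_\varepsilon\to z$ uniformly on $[0,t^*]$. Combining this with $y(t)<z_\varepsilon(t)$ on $[0,t^*]$ and letting $\varepsilon\to0$ yields $y(t)\leq z(t)$ on $[0,t^*]$; since $t^*<T$ is arbitrary, the claim holds on all of $[0,T)$.

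The routine parts are the first-crossing-time argument and the invocation of continuous dependence on parameters and initial data. The point that requires genuine care is that the perturbation must be added to \emph{every} coordinate (hence the vector $e$), because a priori one does not know in which component $y$ and $z_\varepsilon$ first touch; it is also this feature that makes the monotonicity hypothesis exactly what is needed, since at the contact time $t_1$ one controls $y_j\leq z_{\varepsilon,j}$ only for $j\neq i$ and must leave the $i$-th coordinate untouched when comparing $f_i(t_1,y(t_1))$ with $f_i(t_1,z_\varepsilon(t_1))$. A secondary subtlety is that the existence interval of $z_\varepsilon$ must not shrink below $[0,t^*]$ as $\varepsilon\to 0$, which is precisely what the continuous-dependence step provides.
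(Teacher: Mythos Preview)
Your proof is correct and is the standard perturbation/first-crossing-time argument for comparison in quasi-monotone (cooperative) systems. Note, however, that the paper does not actually prove this lemma: it is stated there as a classical ODE result and used without proof. So there is nothing to compare against; your argument simply supplies the missing (routine) justification, and the points you flag as requiring care---perturbing all coordinates simultaneously, and ensuring the interval of existence of $z_\varepsilon$ does not shrink---are exactly the right ones.
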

When the matrix $A$ is cooperative, it is possible to give an equivalent condition to the one required in Theorem \ref{GASforLV} for GAS in classical Lotka-Volterra equations. Let us explain how, starting with the three following lemmas, the two first of which can be found in~\cite{Baigent2010}.
\smallskip
\begin{lemma}
\label{Useful}
Let $A$ be a cooperative matrix. Then it is Hurwitz if and only it is negatively diagonally dominant, \textit{i.e.}, if and only if there exists a vector $v>0$ such that $a_{ii}v_i + \sum_{j\neq i} a_{ij} v_j < 0$ for $i=1, \ldots, N$.
\end{lemma}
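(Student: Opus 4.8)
The plan is to prove the two implications separately. The forward implication (Hurwitz $\Rightarrow$ negatively diagonally dominant) will be obtained by exhibiting an explicit witness vector, namely $(-A)^{-1}\mathbf 1$, after checking that $(-A)^{-1}$ has non-negative entries; the converse will follow from a diagonal conjugation that turns the negative diagonal dominance into a strict row-sum condition, to which the Gershgorin circle theorem applies directly.

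For the implication ``negatively diagonally dominant $\Rightarrow$ Hurwitz'', I would take $v>0$ with $Av<0$ componentwise, set $V:=\mathrm{diag}(v_1,\ldots,v_N)$ and consider $\tilde A:=V^{-1}AV$. This matrix is similar to $A$ (same spectrum), still has non-negative off-diagonal entries, and satisfies $\tilde A\mathbf 1=V^{-1}(Av)<0$, i.e.\ each row sum of $\tilde A$ is strictly negative. Hence for every $i$ one has $\tilde a_{ii}<0$ and $\sum_{j\neq i}|\tilde a_{ij}|=\sum_{j\neq i}\tilde a_{ij}<-\tilde a_{ii}=|\tilde a_{ii}|$. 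By Gershgorin, every eigenvalue of $\tilde A$, hence of $A$, lies in some disc centred at the negative real number $\tilde a_{ii}$ with radius strictly less than $|\tilde a_{ii}|$, and therefore has negative real part; so $A$ is Hurwitz.

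For the implication ``Hurwitz $\Rightarrow$ negatively diagonally dominant'', I would use that, $A$ being cooperative, there is $s>0$ with $B:=A+sI$ entrywise non-negative. Since $sI$ commutes with $B$, $e^{tA}=e^{-st}e^{tB}$, and $e^{tB}=\sum_{k\geq 0}t^kB^k/k!$ is a sum of non-negative matrices, so $e^{tA}\geq 0$ entrywise for all $t\geq 0$. Because $A$ is Hurwitz, $-A$ is invertible and $(-A)^{-1}=\int_0^{+\infty}e^{tA}\,dt$, which is thus entrywise non-negative; being invertible it has no identically-zero row. Consequently $v:=(-A)^{-1}\mathbf 1$ (with $\mathbf 1=(1,\ldots,1)^T$) has strictly positive components, and $Av=-\mathbf 1<0$, that is $a_{ii}v_i+\sum_{j\neq i}a_{ij}v_j=-1<0$ for all $i$: $A$ is negatively diagonally dominant.

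The only delicate point is in this last implication: I want a \emph{strictly} positive witness $v$, not merely a non-negative one. A direct Perron--Frobenius argument applied to $B$ produces only a non-negative Perron eigenvector, which may have vanishing components when $A$ is reducible. The argument above avoids this by representing $(-A)^{-1}$ as $\int_0^\infty e^{tA}\,dt$ (equivalently, as the convergent Neumann series $s^{-1}\sum_{k\geq 0}(B/s)^k$, once one notes that $\rho(B)<s$ because $\rho(B)$ is an eigenvalue of $B$ and $A$ is Hurwitz) and then using that an invertible non-negative matrix has no zero row, so that applying it to $\mathbf 1$ yields a strictly positive vector.
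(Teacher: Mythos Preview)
Your argument is correct in both directions. The Gershgorin step after diagonal conjugation is clean, and for the converse you correctly exploit that $(-A)^{-1}=\int_0^{\infty}e^{tA}\,dt$ is entrywise non-negative and invertible, hence has no zero row, so $(-A)^{-1}\mathbf 1>0$ gives a strictly positive witness with $Av=-\mathbf 1$. Your remark about reducibility is well taken: a bare Perron--Frobenius argument on $B=A+sI$ would not guarantee strict positivity of the eigenvector, whereas your integral (or Neumann-series) representation does.

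As for comparison with the paper: the paper does \emph{not} prove this lemma. It is stated as one of ``the two first of which can be found in~[Baigent2010]'' and is simply quoted from that reference. So there is no in-paper proof to compare against; your proposal supplies a complete and self-contained argument where the paper only gives a citation. One minor point of alignment: in this paper a ``cooperative'' matrix is defined to have strictly negative diagonal entries as well as non-negative off-diagonal ones, which slightly shortens your first implication (you need not deduce $\tilde a_{ii}<0$, since $\tilde a_{ii}=a_{ii}<0$ directly), but your derivation is valid regardless.
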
 
This first lemma will be useful in its own right in this section. A consequence is that
\smallskip
\begin{lemma}
\label{Positive}
If $A$ is cooperative and $r>0$, $A\rho + r$ has a unique solution in $\mathbb{R}^N_{>0}$ if and only if $A$ is Hurwitz.
\end{lemma}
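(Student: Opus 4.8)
The plan is to prove both implications of Lemma~\ref{Positive}, using Lemma~\ref{Useful} as the main tool.

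\textbf{Sufficiency ($A$ Hurwitz $\Rightarrow$ existence of a positive solution).} Suppose $A$ is cooperative and Hurwitz. Since $A$ is Hurwitz, it is invertible (no zero eigenvalue), so there is a unique $\rho^\infty$ with $A\rho^\infty + r = 0$, i.e.\ $\rho^\infty = -A^{-1}r$. It remains to show $\rho^\infty > 0$. Here I would invoke the structure of cooperative (i.e.\ Metzler) matrices: for a Metzler matrix $A$, being Hurwitz is equivalent to $-A^{-1} \geq 0$ entrywise (this is a standard fact about Metzler/M-matrices; alternatively one can derive it from Lemma~\ref{Useful} together with $(-A)^{-1} = \int_0^\infty e^{tA}\,dt$, noting that $e^{tA} \geq 0$ since $A$ is Metzler, and the integral converges precisely because $A$ is Hurwitz). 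Thus $\rho^\infty = -A^{-1}r \geq 0$. To upgrade $\geq 0$ to $> 0$: from $A\rho^\infty = -r < 0$ we get, for each $i$, $a_{ii}\rho_i^\infty = -r_i - \sum_{j\neq i} a_{ij}\rho_j^\infty \leq -r_i < 0$ using $a_{ij}\geq 0$ for $j\neq i$ and $\rho_j^\infty \geq 0$; since $a_{ii} < 0$ this forces $\rho_i^\infty > 0$. Uniqueness is immediate from invertibility.

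\textbf{Necessity (existence of a positive solution $\Rightarrow$ $A$ Hurwitz).} Suppose $A$ is cooperative, $r > 0$, and there exists $\rho^\infty > 0$ with $A\rho^\infty + r = 0$. Then $A\rho^\infty = -r < 0$, so componentwise $a_{ii}\rho_i^\infty + \sum_{j\neq i} a_{ij}\rho_j^\infty = -r_i < 0$ for all $i$. Taking $v := \rho^\infty > 0$, this is exactly the statement that $A$ is negatively diagonally dominant in the sense of Lemma~\ref{Useful}. By Lemma~\ref{Useful}, $A$ is Hurwitz.

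I do not expect a serious obstacle here; the lemma is essentially a repackaging of Lemma~\ref{Useful} plus the elementary sign analysis forcing strict positivity. The one point requiring a little care is the direction of the diagonal-dominance inequality (it is the \emph{negative} diagonal dominance of Lemma~\ref{Useful}, with test vector $v = \rho^\infty$), and, in the sufficiency direction, justifying that a cooperative Hurwitz matrix has a nonnegative inverse --- for which I would either cite the standard Metzler/M-matrix theory used already in \cite{Baigent2010} or give the short $e^{tA}$-integral argument sketched above. The strictness $\rho^\infty > 0$ (rather than merely $\geq 0$) is what genuinely uses $r > 0$ and $a_{ii} < 0$, so that hypothesis must be flagged.
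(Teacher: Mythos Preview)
The paper does not actually prove this lemma in-text; it is attributed to \cite{Baigent2010} and introduced as ``a consequence'' of Lemma~\ref{Useful}. Your argument is correct and is precisely the natural unpacking of that remark: the necessity direction is a one-line application of Lemma~\ref{Useful} with test vector $v=\rho^\infty$, and for sufficiency you invoke the standard Metzler/M-matrix fact that a cooperative Hurwitz matrix satisfies $-A^{-1}\geq 0$ entrywise (the paper itself points to this theory via \cite{Plemmons1977}), followed by the short sign argument upgrading $\rho^\infty\geq 0$ to $\rho^\infty>0$ using $r>0$ and $a_{ii}<0$.
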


Finally, it comes from the theory of M-matrices (see~\cite{Plemmons1977} for a review) that 
\smallskip
\begin{lemma}
\label{Equivalence}
Let $A$ is cooperative. Then 
$A$ is Hurwitz if and only if there exists $D > 0$ diagonal such that $A^TD+ D A<0$.
\end{lemma}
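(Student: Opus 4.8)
I would prove the two implications separately. The ``if'' direction (existence of $D$ forces $A$ Hurwitz) is the easy half of Lyapunov's theorem and uses nothing about the sign pattern of $A$; the ``only if'' direction is the M-matrix content, and I would reduce it entirely to Lemma~\ref{Useful} so as to keep the argument self-contained.

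\textbf{Step 1 (``if'').} Suppose $D>0$ is diagonal with $A^{T}D+DA<0$. Let $\lambda\in\mathbb{C}$ be an eigenvalue of $A$ and $v\in\mathbb{C}^{N}\setminus\{0\}$ an associated eigenvector. Since $A^{T}D+DA$ is real symmetric, $v^{*}(A^{T}D+DA)v$ is a real number; using $Av=\lambda v$ and $A\bar v=\bar\lambda\bar v$ (as $A$ is real), a direct computation gives $v^{*}(A^{T}D+DA)v=(\lambda+\bar\lambda)\,v^{*}Dv=2\,\mathrm{Re}(\lambda)\,v^{*}Dv$. Because $v^{*}Dv=\sum_i d_i|v_i|^{2}>0$ while the left-hand side is negative, we get $\mathrm{Re}(\lambda)<0$, so $A$ is Hurwitz. (The only point to get right here is the transpose/conjugate bookkeeping in the quadratic form.)

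\textbf{Step 2 (``only if'').} Assume $A$ is cooperative and Hurwitz. Then $A^{T}$ is also cooperative (its off-diagonal entries are the $a_{ji}\ge 0$ and its diagonal entries are the $a_{ii}<0$) and Hurwitz (same spectrum as $A$), so Lemma~\ref{Useful} applied to $A$ and to $A^{T}$ yields vectors $u,w\in\mathbb{R}^{N}_{>0}$ with $Au<0$ and $A^{T}w<0$. The key idea is to set $D:=\mathrm{diag}(w_{1}/u_{1},\dots,w_{N}/u_{N})>0$ and $M:=-(A^{T}D+DA)$. Then $M$ is symmetric with nonpositive off-diagonal entries, since for $i\ne j$ one has $M_{ij}=-\big(\tfrac{w_i}{u_i}a_{ij}+\tfrac{w_j}{u_j}a_{ji}\big)\le 0$ from $a_{ij},a_{ji}\ge 0$; and $M$ maps $u$ into the positive orthant, since $(Mu)_{i}=-(A^{T}w)_{i}-\tfrac{w_i}{u_i}(Au)_{i}>0$. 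A symmetric matrix with these two properties is positive definite: from $Mu>0$ one first deduces $M_{ii}>0$ for every $i$, hence $-M$ is cooperative and satisfies $(-M)u<0$, so Lemma~\ref{Useful} gives that $-M$ is Hurwitz; as $-M$ is symmetric its eigenvalues are real and negative, so the eigenvalues of $M$ are positive, i.e. $M>0$. Therefore $A^{T}D+DA=-M<0$, which is what we wanted.

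\textbf{Main obstacle.} There is no deep obstacle once one hits on the scaling $D=\mathrm{diag}(w_i/u_i)$, which simultaneously fixes the sign of the off-diagonal entries and makes $Mu>0$ drop out of the two diagonal-dominance vectors supplied by Lemma~\ref{Useful}; the remaining work (the identity in Step~1, checking $A^{T}$ inherits the hypotheses, and the computation of $(Mu)_i$) is routine. Alternatively, one can bypass the explicit construction and simply invoke the classical list of equivalent characterisations of non-singular M-matrices (e.g. from~\cite{Plemmons1977}): $-A$ is a $Z$-matrix whose spectrum lies in the open right half-plane precisely when $A$ is cooperative and Hurwitz, and ``diagonal Lyapunov stability of $A$'' is one of the standard equivalent items on that list.
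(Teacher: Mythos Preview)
Your proof is correct. The paper, however, does not actually prove this lemma: it simply records it as a known fact from the theory of M-matrices and points to \cite{Plemmons1977}. So your Step~1/Step~2 argument is strictly more than what the paper provides.

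The route you take is nice in that it is entirely self-contained relative to the paper: both implications are reduced to Lemma~\ref{Useful}, the only external input the paper already assumes. The scaling $D=\mathrm{diag}(w_i/u_i)$ built from the two diagonal-dominance vectors for $A$ and $A^{T}$ is exactly the classical Barker--Berman--Plemmons construction, and your verification that $M=-(A^{T}D+DA)$ is a symmetric $Z$-matrix with $Mu>0$, hence positive definite via another application of Lemma~\ref{Useful}, is clean. Your closing remark that one could instead just quote the list of equivalent M-matrix characterisations from \cite{Plemmons1977} is, in fact, precisely what the paper does; so you have anticipated the paper's approach and given a constructive proof on top of it.
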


An important consequence of these three lemmas is the following rephrasing of Theorem \ref{GASforLV} for classical Lotka-Volterra equations.
\smallskip
\begin{proposition}
\label{ClassicalLVCooperative}
Assume that $A$ is cooperative, $r>0$ and that the equations (\ref{ClassicalLV}) have a unique steady-state in $\mathbb{R}^N_{>0}$. Then the equations are globally defined and this steady-state is GAS.
\end{proposition}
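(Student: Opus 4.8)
The plan is to show that the hypotheses of Proposition \ref{ClassicalLVCooperative} are precisely what is needed to invoke Theorem \ref{GASforLV}, feeding the assumptions through the three preceding lemmas. First I would identify coexistence steady-states of \eqref{ClassicalLV} with positive solutions of the linear system $A\rho + r = 0$: a steady-state $\rho^\infty \in \mathbb{R}^N_{>0}$ has all components strictly positive, so each bracket $r_i + \sum_{j} a_{ij}\rho_j^\infty$ must vanish, i.e. $A\rho^\infty + r = 0$; conversely any solution of $A\rho + r = 0$ in $\mathbb{R}^N_{>0}$ is a positive equilibrium. Hence the assumed existence and uniqueness of a coexistence steady-state is the same as existence and uniqueness of a solution to $A\rho + r = 0$ in $\mathbb{R}^N_{>0}$.

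Next, since $A$ is cooperative and $r > 0$, Lemma \ref{Positive} translates this existence-and-uniqueness statement into the assertion that $A$ is Hurwitz. Then Lemma \ref{Equivalence}, again using that $A$ is cooperative, produces a diagonal matrix $D > 0$ such that $A^T D + D A < 0$. At this stage both hypotheses of Theorem \ref{GASforLV} are in force — a positive solution $\rho^\infty$ of $A\rho + r = 0$, and a diagonal $D > 0$ with $A^T D + D A < 0$ — so Goh's theorem applies directly and gives that $\rho^\infty$ is GAS in $\mathbb{R}^N_{>0}$, which is moreover consistent with the uniqueness we assumed. Global definiteness is included in this conclusion: the associated Lyapunov function $\rho \mapsto \sum_i \lambda_i\big(\rho_i - \rho_i^\infty - \rho_i^\infty \ln(\rho_i/\rho_i^\infty)\big)$, with $\lambda_i$ the diagonal entries of $D$, is nonnegative on $\mathbb{R}^N_{>0}$ and nonincreasing along trajectories, which together with the invariance of the open orthant for \eqref{ClassicalLV} forces every positive trajectory to stay bounded and hence to exist for all $t \geq 0$.

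The argument is thus essentially a concatenation of results already available, and the only point that requires a little care is the bookkeeping in the first step: one must check that "coexistence steady-state of the ODE system" and "positive solution of the linear system $A\rho + r = 0$" genuinely coincide, and that the uniqueness hypothesis transfers faithfully through Lemma \ref{Positive} so that the $\rho^\infty$ fed into Theorem \ref{GASforLV} is the one in the statement. (Lemma \ref{Useful} is not needed for this proposition; it is used elsewhere in the section.) No serious analytic obstacle arises beyond this.
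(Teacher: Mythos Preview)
Your proposal is correct and follows exactly the route the paper intends: the proposition is presented there as an immediate consequence of Lemmas~\ref{Positive} and~\ref{Equivalence} combined with Goh's Theorem~\ref{GASforLV}, which is precisely the chain of implications you spell out. Your observation that Lemma~\ref{Useful} is not directly needed here is accurate, even though the paper loosely refers to ``these three lemmas''.
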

Thus, in the context of cooperation between the species, the requirement that $A$ is Hurwitz is somehow optimal to obtain a GAS coexistence steady-state, since it is already required to have its mere existence, a fact mentioned in~\cite{Goh1979}. We will mainly work with this characterisation (rather than the equivalent one given by Lemma \ref{Equivalence} which we used for a general interaction matrix $A$) because the next results will lead us to modify the matrix $A$: analysing whether it is still Hurwitz or not is easier than checking this equivalent condition.


\subsection{A priori bounds}
For the remaining part of this section, we make the assumption that $r_i$ is positive on $X_i$ for $i=1, \ldots, N$, and we define the lower and upper bounds $0<d_i^m \leq d_i(x) \leq d_i^M$, $0<r_i^m \leq r_i(x) \leq r_i^M$.
\smallskip
\begin{theorem}
\label{GASforLVCoop}
Assume that the matrix $\tilde{A}$ defined by $\tilde{a}_{ii}:= d_i^m a_{ii}$ and $\tilde{a}_{ij} := d_i^M a_{ij}$ is Hurwitz. 
Then the solutions to (\ref{Equations}) are globally defined and bounded.
\end{theorem}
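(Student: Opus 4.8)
The plan is to obtain a priori bounds on the functions $\rho_i$ by comparing the integro-differential system to a finite-dimensional cooperative ODE system governed by the matrix $\tilde{A}$, and then invoking the comparison lemma for cooperative ODEs stated above together with Proposition \ref{ClassicalLVCooperative}. The first step is to integrate \eqref{Equations} in $x$ over $X_i$. Because we cannot pull $r_i(x)$ and $d_i(x)$ out of the integral, we bound them: using $r_i(x)\le r_i^M$, $d_i(x)\le d_i^M$, $d_i(x)\ge d_i^m$, and crucially using the sign of the coefficients $a_{ij}$ (nonnegative off-diagonal, negative on the diagonal) together with nonnegativity of the $\rho_j$, we get
\begin{equation*}
\frac{d\rho_i}{dt}=\int_{X_i}\Big(r_i(x)+d_i(x)\sum_{j}a_{ij}\rho_j\Big)n_i\,dx \le \Big(r_i^M + d_i^m a_{ii}\rho_i + d_i^M\sum_{j\neq i}a_{ij}\rho_j\Big)\rho_i = \Big(r_i^M + \sum_j \tilde a_{ij}\rho_j\Big)\rho_i.
\end{equation*}
The key observation making the signs work out is that the intraspecific term $a_{ii}\rho_i$ is negative, so to make the bracket \emph{larger} we replace $d_i(x)$ by its smallest value $d_i^m$ there, while the interspecific terms $a_{ij}\rho_j$ ($j\neq i$) are nonnegative, so we replace $d_i(x)$ by its largest value $d_i^M$. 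This is precisely the definition of $\tilde{A}$.

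**Applying the comparison argument.**

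Next I would set $z$ to be the solution of the classical Lotka-Volterra ODE system \eqref{ClassicalLV} with matrix $\tilde{A}$ and growth vector $(r_i^M)_i$, started from $z(0)=\rho(0)=(\rho_i^0)_i$ (with $\rho_i^0=\int_{X_i}n_i^0$). The right-hand side $f_i(z)=(r_i^M+\sum_j\tilde a_{ij}z_j)z_i$ is locally Lipschitz, and since $\tilde a_{ij}\ge 0$ for $j\neq i$ one checks directly that $f_i$ is nondecreasing in $z_j$ for $j\neq i$ on the relevant region; the comparison lemma then applies provided we first confirm both $\rho$ and $z$ stay in the domain where monotonicity holds. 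By the estimate above, $y:=\rho$ is a subsolution of $\dot y = f(y)$ on its maximal interval of existence, and $y(0)=z(0)$, so the lemma gives $\rho(t)\le z(t)$ componentwise for as long as both are defined. Since $\tilde{A}$ is Hurwitz and cooperative and $(r_i^M)_i>0$, Lemma \ref{Positive} guarantees $\tilde A \zeta + r^M=0$ has a (unique) solution $\zeta$ in $\mathbb{R}^N_{>0}$, and Proposition \ref{ClassicalLVCooperative} then gives that $z$ is globally defined and bounded (it converges to $\zeta$). Hence $0\le\rho_i(t)\le z_i(t)\le\rho^{sup}:=\max_i\sup_{t\ge0}z_i(t)<\infty$ for all $t$ in the maximal interval, which supplies the a priori bound $\rho(t)\le\rho^{sup}$ required by Theorem \ref{ExistenceUniqueness}; that theorem then yields global existence, and the same bound gives global boundedness.

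**The main obstacle.**

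The main technical point to be careful about is the domain of validity of the monotonicity hypothesis and of the comparison lemma: $f_i(z)=(r_i^M+\sum_j\tilde a_{ij}z_j)z_i$ is monotone in the off-diagonal variables only where $z_i\ge0$, and the differential inequality for $\rho$ needs $\rho_j\ge0$, which we have since the $n_i$ are nonnegative — so one should either restrict $f$ to the closed positive orthant (and check it is forward-invariant for $z$, which it is because $\dot z_i=0$ whenever $z_i=0$) or extend $f$ appropriately outside it. A small subtlety is also that we only know the comparison holds on the common interval of existence, so the argument must be phrased so that the bound $\rho\le z$ on $[0,T)$ together with global existence of $z$ bootstraps to $T=+\infty$; this is standard (if $\rho$ blew up at finite time it would have to do so while bounded above by the global, bounded $z$, a contradiction). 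Finally, I should remark that the a priori bound furnished here is exactly the hypothesis of Theorem \ref{ExistenceUniqueness}, which closes the argument.
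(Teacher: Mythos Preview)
Your proposal is correct and follows essentially the same approach as the paper: integrate the equations in $x$, use the signs of the $a_{ij}$ together with the bounds $d_i^m\le d_i\le d_i^M$ and $r_i\le r_i^M$ to obtain the differential inequality governed by $\tilde A$, and then compare with the classical Lotka--Volterra system via the cooperative comparison lemma and Proposition~\ref{ClassicalLVCooperative}. Your write-up is in fact more careful than the paper's about the domain of monotonicity and the bootstrap from the a priori bound to global existence via Theorem~\ref{ExistenceUniqueness}.
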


\begin{proof}
First remark that since $\tilde{A}$ is Hurwitz, then so is $A$ from Lemma \ref{Useful}. 

We integrate the equations with respect to $x$ and bound them (through $r_i(x) \leq r_i^M$)
\begin{equation*}
\frac{d}{dt}  \rho_i(t) \leq 
\left(r_i^M + \sum_{j=1}^N a_{ij}\rho_j(t) \int_{X_i} d_i(x) n_i(t,x) \, dx \right) \; \; \; \; i=1, \ldots, N.
\end{equation*}
Since the diagonal elements are negative, the off-diagonal non-negative, we obtain
\begin{equation*}
\frac{d}{dt}  \rho_i(t) \leq 
\left(r_i^M +a_{ii} d_i^m \rho_i + \sum_{j\neq i}^N a_{ij} d_i^M  \rho_j(t)\right) \rho_i(t), \; \; \; \; i=1, \ldots, N.
\end{equation*}
Thus, the vector $(\rho_1, \ldots, \rho_N)$ is a subsolution of the previous system which is nothing but classical Lotka-Volterra equations with interaction matrix $\tilde{A}$. 
Thanks to (\ref{ClassicalLVCooperative}), the solutions to this system are bounded. Thus, so are those of the integro-differential one. 
\end{proof}

\begin{remark}
Note that the assumption that $\tilde{A}$ is Hurwitz reduces to $A$ being Hurwitz in the case of constant coefficients. Thus, this result for boundedness is sharp, since it is exactly the one required for convergence to the coexistence steady-state when the equations at hand are classical Lotka-Volterra equations.
\end{remark}

Using Theorem \ref{GASforLVCoop}, we can thus define $\rho^M>0$ as the GAS steady-state for the system obtained in the previous proof, that is to the equations
\begin{equation*}
\frac{d}{dt}  u_i = 
\left(r_i^M +a_{ii} d_i^m u_i + \sum_{j\neq i}^N a_{ij} d_i^M  u_j(t)\right) u_i(t), \; \; \; \; i=1, \ldots, N.
\end{equation*}
In other words, $\rho^M:= - \left(\tilde{A}\right)^{-1} r^M$ where $r^M$ is the vector $(r_i^M)_{1 \leq i \leq N}$. This means that we can write 
\begin{equation}
\label{UpperBound}
\limsup_{t \rightarrow +\infty} \rho_i \leq \rho_i^M \; \; \; \; i=1, \ldots, N.
\end{equation}
In a similar fashion to the previous proposition, bounding $\rho_i$ away from $0$: 
\begin{equation*}
\frac{d}{dt}  \rho_i \geq 
\left(r_i^m +a_{ii} d_i^M \rho_i + \sum_{j\neq i}^N a_{ij} d_i^m  \rho_j(t)\right) \rho_i(t), \; \; \; \; i=1, \ldots, N,
\end{equation*}
leading to 
\begin{equation}
\label{LowerBound}
\liminf_{t \rightarrow +\infty} \rho_i \geq \rho_i^m \; \; \; \; i=1, \ldots, N.
\end{equation}
where $\rho^m:= - \left(B \right)^{-1} r^m>0$ with $B$, a Hurwitz matrix defined by $b_{ii}:= d_i^M a_{ii}$, $b_{ij}:=d_i^m a_{ij}$ for $i \neq j$ and $r^m:=(r_i^m)_{1 \leq i \leq N}$. 

\subsection{GAS in the mutualistic case}
We can now state the main result: 
\smallskip
\begin{theorem}
\label{GAS_Mut_Again}
Assume $r_i>0$ for all $i=1, \ldots, N$, and that the matrix $\hat{A}$ defined by $\hat{a}_{ii}:=d_i^m \rho_i^m a_{ii}$ and  $\hat{a}_{ij} := d_i^M \rho_i^M a_{ij}$ for $i\neq j$, is Hurwitz. \par 
Then $\tilde{A}$, $A$ and $B$ are also Hurwitz. Furthermore, $\rho^\infty:= - A^{-1} I^\infty$ lies in $\mathbb{R}^N_{>0}$ and it is GAS. 
\end{theorem}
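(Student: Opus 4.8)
The plan is to leverage the a priori bounds \eqref{UpperBound}--\eqref{LowerBound} together with a bootstrapping/iteration argument on $\limsup$ and $\liminf$ to squeeze all the $\rho_i$ toward the common value $\rho_i^\infty$. First, I would dispatch the monotonicity/Hurwitz bookkeeping: since $\hat A$ is cooperative (diagonal entries negative, off-diagonal nonnegative, because the coefficients $\rho_i^m$, $\rho_i^M$, $d_i^m$, $d_i^M$ are all positive) and $\hat a_{ii} \leq \tilde a_{ii} \leq d_i^m a_{ii}$ while $\hat a_{ij} \geq \tilde a_{ij} \geq d_i^m a_{ij}$ for $i\neq j$ — using $0<c_i<1<C_i$ in the informal version, or more carefully $\rho_i^m<1<\rho_i^M$ after the appropriate normalisation — one applies Lemma~\ref{Useful}: a cooperative matrix dominated entrywise (in the diagonal-dominance sense) by a Hurwitz cooperative matrix is itself Hurwitz. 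This gives that $\tilde A$, $A$, $B$ are Hurwitz, so by Lemma~\ref{Positive} the equation $A\rho + I^\infty = 0$ has a unique solution $\rho^\infty \in \mathbb{R}^N_{>0}$ (note $I^\infty>0$ since $r_i>0$), and by Theorem~\ref{GASforLVCoop} the integro-differential solutions are globally defined and bounded.

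Next, the heart of the argument: I would show each $\rho_i$ is $BV$ on $[0,+\infty)$, hence converges, and then identify the limit. The route suggested by the text ("sub and supersolution techniques can succeed", "all functions $\rho_i$ are BV") is to integrate \eqref{Equations} in $x$ and exploit the cooperative structure. Set $\bar\rho_i := \limsup_{t\to\infty}\rho_i(t)$ and $\underline\rho_i := \liminf_{t\to\infty}\rho_i(t)$; the bounds \eqref{UpperBound}--\eqref{LowerBound} give $0<\rho_i^m \leq \underline\rho_i \leq \bar\rho_i \leq \rho_i^M$. One then iterates: plugging the current upper bounds $\bar\rho_j$ into the death term of species $i$ and the current lower bounds into the growth term (using positivity of the off-diagonal $a_{ij}$ and negativity of $a_{ii}$), one derives via the comparison Lemma a refined classical Lotka–Volterra subsolution and supersolution, whose GAS equilibria (Proposition~\ref{ClassicalLVCooperative}) furnish improved bounds $\bar\rho \leq F(\bar\rho,\underline\rho)$, $\underline\rho \geq G(\bar\rho,\underline\rho)$. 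Because the maps involved are contractions when $\hat A$ is Hurwitz — this is exactly where the Hurwitz hypothesis on $\hat A$, rather than on $\tilde A$, is needed — the iteration forces $\bar\rho = \underline\rho$, and the common value must solve $A\rho + I^\infty = 0$ at the selected phenotypes, i.e. equals $\rho^\infty$. Alternatively, and perhaps more cleanly, one shows directly that $t\mapsto \rho_i(t)$ has bounded variation: writing $\rho_i' = \rho_i\,(r_i + d_i\sum_j a_{ij}\rho_j$ averaged against $n_i/\rho_i)$ and using that the sign structure of the cooperative system makes the vector field quasi-monotone, the $\omega$-limit set reduces to a single point; $BV$ on the half-line then yields convergence of $\rho$, and the concentration statement for $n_i$ follows from the Proposition in Section~\ref{Section2} once $\rho\to\rho^\infty$ is known.

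The main obstacle I anticipate is the iteration/contraction step: closing the gap between $\limsup$ and $\liminf$ requires that the linear map encoding "how upper bounds feed into refined lower bounds and vice versa" be a strict contraction, and verifying this is precisely where the specific weights $d_i^m\rho_i^m$ on the diagonal and $d_i^M\rho_i^M$ off the diagonal in $\hat A$ come from — they are chosen so that the fixed-point map's linearisation is governed by $\hat A^{-1}$ applied to a cooperative perturbation, and Hurwitzness of $\hat A$ is equivalent (Lemma~\ref{Useful}) to the diagonal-dominance that makes this map contracting. Getting the constants $c_i, C_i$ (equivalently $\rho_i^m, \rho_i^M$) to line up correctly, and handling the fact that the integro-differential averages $\int d_i n_i / \rho_i$ are time-dependent and only asymptotically pinned down, is the delicate bookkeeping; everything else (global existence, uniqueness of $\rho^\infty$, concentration) is routine given the earlier results.
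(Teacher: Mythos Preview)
Your bookkeeping paragraph (Hurwitz implications, global existence, positivity of $\rho^\infty$) is essentially fine and matches the paper. The core argument, however, has a genuine gap.

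Your primary route is a $\limsup$/$\liminf$ iteration: feed the current upper bounds $\bar\rho_j$ into the supersolution comparison, extract refined upper bounds, and similarly for lower bounds, hoping the two sequences squeeze together. In a \emph{mutualistic} system this iteration decouples: because $a_{ij}\geq 0$ for $j\neq i$, the comparison system producing upper bounds uses only the $\bar\rho_j$ (never the $\underline\rho_j$), and the fixed point of that map is exactly $\rho^M=-\tilde A^{-1}r^M$; likewise the lower iteration stalls at $\rho^m=-B^{-1}r^m$. No information about where $n_i$ is concentrating enters, so the ``effective growth rate'' stays pinned at $r_i^M$ (resp.\ $r_i^m$) throughout and the gap $\rho^M-\rho^m$ never closes. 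The Hurwitz hypothesis on $\hat A$ is not used anywhere in this scheme --- only $\tilde A$ and $B$ Hurwitz are --- which is a sign that the argument cannot be complete. (The squeeze you describe is the standard one for \emph{competitive} systems, where upper bounds of one species feed into lower bounds of another; mutualism lacks that cross-feedback.)

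The paper's actual argument is of a different nature, and it is where $\hat A$ genuinely appears. One sets $q_i:=\rho_i'$, differentiates once more, and after discarding a nonnegative term obtains $\tfrac{d}{dt}(q_i)_- \leq b_i(t)\,\big(A(q)_-\big)_i$ with $b_i(t)=\int_{X_i} d_i n_i$. The a priori bounds \eqref{UpperBound}--\eqref{LowerBound} give, for large $t$, $b_i a_{ii}\leq d_i^m\rho_i^m a_{ii}=\hat a_{ii}$ and $b_i a_{ij}\leq d_i^M\rho_i^M a_{ij}=\hat a_{ij}$, so $(q)_-$ is a subsolution of the linear cooperative system $\dot y=(\hat A+\varepsilon J)y$. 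Hurwitzness of $\hat A$ then forces $(q_i)_-\to 0$ exponentially; together with boundedness of $\rho_i$ from above this yields $\rho_i\in BV([0,\infty))$, hence convergence, and the limit is identified as $\rho^\infty$ by the analysis of Section~\ref{Section2}. You do gesture at a direct $BV$ route in your second paragraph, but the specific mechanism --- differentiating $\rho_i'$ and controlling its negative part via a linear differential inequality with matrix $\hat A$ --- is the missing idea.
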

\begin{proof}
The fact that $\tilde{A}$, $A$ and $B$ are also Hurwitz is a direct consequence of Lemma \ref{Useful}. \par 
$\tilde{A}$ being Hurwitz, the solutions are globally defined with $\rho$ bounded thanks to Theorem \ref{GASforLVCoop}.

$A$ being Hurwitz, it is invertible and $\rho^\infty:= - A^{-1} I^\infty$ is in $\mathbb{R}^N_{>0}$ thanks to Lemma \ref{Positive}. \par

Let us now prove that it is GAS. The idea is to prove that each $\rho_i$ is $BV$ on $[0,+\infty)$. Identifying the limit is straightforward, thanks to the reasoning made in Section \ref{Section2}. 

For any $i$,  we define $q_i := \dfrac{d\rho_i}{dt} $ and write $R_i=R_i\left(x, \rho_1, \ldots, \rho_N\right)$ for readability. 
Since $q_i= \dfrac{d\rho_i}{dt} = \int_{X_i} R_i n_i$, we obtain
\begin{align*}
\dfrac{dq_i}{dt} & =  \int_{X_i} R^2_i n_i + \int_{X_i} \left(\sum_{j=1}^N \dfrac{\partial R_j}{\partial \rho_j} q_j   \right)n_i \\ 
			 & \geq \sum_{j=1}^N a_{ij}\left( \int_{X_i} d_i(x) n_i(t,x) \,dx \right)q_j.
\end{align*}
Let $b_i(t):= \int_{X_i} d_i(x) n_i(t,x) \,dx$. The idea is that $\rho_i$ is "mostly" increasing, so we are interested in $(q_i)_-$.
 for which we have the (a.e.) bound
 \begin{equation*}
\dfrac{d(q_i)_-}{dt}  \leq b_i \sum_{j=1}^N a_{ij} q_j \left( -\ind{q_i>0} \right) 
\end{equation*}
On the one hand, 
\[ b_i a_{ii} q_i \left( -\ind{q_i>0} \right) = b_i a_{ii} (q_i)_-.\]
On the other hand, for $i \neq j$, \[ b_i a_{ij} q_j \left( -\ind{q_i>0} \right)\leq b_i a_{ij} (q_j)_-.\]  Combining these two, we get
 \begin{equation*}
\dfrac{d(q_i)_-}{dt}  \leq b_i \left(A (q)_- \right)_i.
\end{equation*}
We fix $\varepsilon>0$ small enough and $t$ large enough (say $t\geq t_0$) such that $\hat{A} + \varepsilon J$ is Hurwitz (where $J$ is the matrix composed of ones only) and such that, for each $(i,j)$, $b_i(t) a_{ij} \leq \hat{a}_{ij} + \varepsilon$. The first requirement is easily derived from Lemma \ref{Useful} since $\hat{A} + \varepsilon J$ is clearly cooperative and negatively diagonally dominant for $\varepsilon>0$ small enough. The second requirement comes from the lower and upper bounds for the functions $\rho_i$ as stated in (\ref{UpperBound}) and (\ref{LowerBound}).

The resulting inequality is 
 \begin{equation*}
\dfrac{d(q_i)_-}{dt}  \leq \left(\left(\hat{A} + \varepsilon  J\right)(q)_- \right)_i,
\end{equation*}

so that $\left((q_1)_-, \ldots, (q_N)_-\right)$ is a subsolution of the system with same initial conditions at $t_0$, given by
 \begin{equation*}
\dfrac{d y} {dt}  = \left(\hat{A} + \varepsilon J\right) y.
\end{equation*}
The solutions to this system go exponentially to $0$ since $\hat{A} + \varepsilon J $ is Hurwitz. 

For any $i$, we have thus proved that $(q_i)_-$ goes to $0$ exponentially. 
Together with the fact that $\rho_i$ is bounded from above, we conclude that it is $BV$ on $[0,+\infty)$. Indeed it holds true that a function $u$ which is both bounded from above and such that $u_- \in L^1([0,+\infty))$ is $BV$ on $[0,+\infty)$, see~\cite[Lemma 6.7]{Perthame2006}. 
Therefore, $\rho$ converges (to $\rho^\infty$). 
\end{proof}

%
%
%

\section{Conclusion}
\label{Section5}

We have analysed the global asymptotic stability properties for integro-differential systems of $N$ species structured by traits $x$ belonging to different trait spaces $X_i$. The coupling comes from a non-local logistic term, which is a linear combination of the total number of individuals $\rho_i$ in each species.  Theses systems generalise the usual Lotka-Volterra ODEs for which many stability analyses are available in the literature. Our main focus has been on the asymptotic behaviour of the functions $\rho_i(t)$ as $t\rightarrow +\infty$, especially towards equilibrium $\rho^{\infty}$ with positive components, \textit{i.e.}, of persistence of all species. In Section \ref{Section2}, we explained how identifying it essentially determines the asymptotic behaviour of the underlying density $n_i$, namely the phenotypes on which the measures $n_i(t, \cdot)$ concentrate in large time.
 
In Section \ref{Section3}, an adequate Lyapunov functional allowed us to state a general result relying solely on an assumption on the matrix $A$, regardless of the type of interactions. For $N=2$, this is essentially a sharp result, but becomes more restrictive for $N \geq 3$. This tool also provided us with convergence rates to equilibrium. In Section \ref{Section4}, we presented another strategy based on a $BV$ bound which yielded a second result of global asymptotic stability, this time for mutualistic equations.

The result of Theorem \ref{GAS_Mut_Again} is partly less general than the one given in Theorem \ref{GAS_Again} because it requires a sign on the coefficients of $A$. However, the set of matrices which satisfy the hypothesis given in the last theorem is an open subset of the set of real matrices $\mathbb{R}^{N \times N}$ in any dimension. This in sharp contrast with the hypothesis of Theorem \ref{GAS_Again}, which, as already mentioned, imposes some polynomial equalities on the coefficients of $A$ as soon as $N \geq 3$. In other words, for a small perturbation of a cooperative matrix for which GAS holds, GAS still holds. In particular, if one has weakly (but mutualistically) coupled equations, GAS holds, whereas Theorem \ref{GAS_Again} does not cover the case of any weakly coupled equations for general interactions, except for $N = 2$.

In both cases, the assumptions fall within the class of matrices which cannot have off-diagonal coefficients which are too high compared to the diagonal ones. The present results thus apply to cases where interactions among individuals of a same species are not only blind because of the term $a_{11} \rho_1$, but also stronger than the interactions between species. In other words, each one of them has its own ecological niche inside which interactions are independent of how given phenotypes $x$ and $y$ are away from another.

Let us remark that the $BV$ method would apply to more general functions $R_i(x,\rho_1, \ldots, \rho_N)$, as long as they are increasing in the variables $\rho_j$, $j \neq i$. However, the Lyapunov functional used in Theorem \ref{GAS_Again} seems to be dependent on the linear coupling chosen here and it is an open problem to generalise our results for other settings. Another open question is about finding whether there are matrices $A$ for which the underlying classical Lotka-Volterra equations converge to the coexistence steady state (for example such that there exists $D>0$ with $A^T D +D A<0$), but for which there is no GAS for the integro-differential system. Numerically at least, we could not build any such case.

We finally mention a natural extension of these integro-differential systems, which has drawn much attention in adaptive dynamics: that is when one adds a mutation term $\mathcal{M}[n]$, and a typical single equation then writes 
\begin{equation}
\label{Mutation}
\frac{\partial}{\partial t}  n(t,x) = \left(r(x) - \int_X K(x,y) n(t,y) \, dy \right) n(t,x)+\mathcal{M}[n](t,x).
\end{equation}

This is usually done either through a second order elliptic operator like \[\mathcal{M}[n](t,x)= \beta \Delta n(t,x),\] with Neumann boundary conditions, or using an integral operator allowing for long-range mutations like \[\mathcal{M}[n](t,x) = \int_X m(x,y) \left(n(t,y)-n(t,x)\right) \,dy.\] 
Finally, let us mention that in some more recent works an advection term is also considered~\cite{Chisholm2016a, Chisholm2015}. The idea is to model stress-induced adaptation: individuals actively adapt to their environment and this can be thought of as an appropriate modelling of Lamarckism induction. \par
Most of the studies on model \eqref{Mutation} have aimed at understanding how small mutations affect the dynamics of the surviving phenotype (when one expects a single Dirac mass) with a vanishing mutation term $\mathcal{M}_\varepsilon$, after a proper rescaling of time~\cite{Barles2009, Lorz2011, Mirrahimi2015}. Without smallness assumptions on the mutations, non-trivial steady-states have been investigated in detail in~\cite{Bonnefon2015, Coville2013a, Leman2015}, and results of GAS have essentially been obtained for $K(x,y) = k(y)$. It would be interesting to investigate the asymptotic stability properties of steady states for more general kernels. To the best of our knowledge, general asymptotic results are indeed still elusive, even in the case explored in our paper, namely when $K(x,y)=d(x)$ for $d$ non constant. It is not clear whether the techniques developed in the present paper can be extended to that setting nor to systems of this form.

\appendix 
\section{Proof of Theorem \ref{ExistenceUniqueness}} 
\label{AppA}
\begin{proof}

The proof is based on the Banach-Picard fixed point theorem. 
We set $T>0$, and define the Banach spaces $Z :=  \prod_{i=1}^{N} L^1(X_i)$ endowed with the max norm, and
$E:=C\left([0,T],Z\right)$ endowed with the norm $\|m\|_E:=\sup_{0\leq{t}\leq{T}}\|m(t)\|_{Z}$.
Finally, we consider the following closed subset: $F:=\left\{m\in{E} \, / \, m\geq 0 \,  \,  \text{and} \,  \,  \|m\|_E\leq{M}\right\}$ where $M > \rho^{\sup}$. 

We now build the application. 
Let $m$ be a fixed element in $F$, and let us define for $i=1, \ldots, N$
\begin{equation*}
  \tilde{\rho_i}(t)=\int_{X_i} m_i(t,x)\,dx.
\end{equation*}
For each $i=1, \ldots, N$ and each fixed $x \in{X_i}$, we consider the solution $\gamma_{i,x}$ to the following differential equation:
\begin{equation}
\label{ODE}
  \left\{
    \begin{array}{ll}
 \frac{d\gamma_{i,x}}{dt}=R_i\left(x, \tilde{\rho}_1(t), \ldots, \tilde{\rho}_N(t)\right) \, \gamma_{i,x} \\
\gamma_{i,x}(0)=n_i^0(x)\\
    \end{array}
\right. 
\end{equation}
which is global on $[0,T]$. 

We then define for all $(t,x)$ in $[0,T] \times X$ and $i=1, \ldots, N$ the function $n_i(t,x):=\gamma_{i,x}(t)$, thus building an application $\Phi$ through $\Phi(m):=n$.

We now show that $\Phi$ maps $F$ onto itself. \\
The equation (\ref{ODE}) can be solved explicitly by 
\begin{equation*}
n_i(t,x)=n_i^0(x)e^{\int_0^t  R_i\left(x, \tilde{\rho}_1(s), \ldots, \tilde{\rho}_N(s)\right)  \, ds},
\end{equation*}
which shows both $n\geq{0}$ and $n\in{E}$. 

Let us fix some $i=1, \ldots, N$ and bound as follows 
\begin{equation*}
\frac{\partial}{\partial t}  n_i(t,x) \leq  \left(\|r_i\|_{L^\infty} + \|d_i\|_{L^\infty} \|A\|_\infty \rho^{\sup}  \right) n_i(t,x).
\end{equation*}
Integrating in $x$, we uncover $\frac{d}{d t} \|n(t)\|_Z\leq C  \|n(t)\|_Z$ for some constant $C>0$, which leads to
\begin{equation*}
 \|n(t)\|_Z\leq \rho^{\sup} e^{CT}.
\end{equation*}
To obtain $n \in{F}$, it only remains to choose $T$ small enough so that $ \rho^{\sup} e^{CT} \leq K$.

The last step is to prove the strong contraction property for $\Phi$. In the following, $C$ will denote various positive constants, which might change from line to line. Let $(m^1,m^2)\in{F^2}$ and $(n^1,n^2)$ its image by $\Phi$. We define $\tilde{\rho}^k$ as before for $k=1,2$. For all $i$, we write
\begin{equation*}
(n_i^1-n_i^2) (t,x) = n_i^0(x) \left[e^{\int_0^t R_i\left(x, \tilde{\rho}^1_1(s), \ldots, \tilde{\rho}^1_N(s)\right)  \, ds} - e^{\int_0^t R_i\left(x, \tilde{\rho}^2_1(s), \ldots, \tilde{\rho}^2_N(s)\right)  \, ds} \right]. 
\end{equation*}
Now, since the argument in the exponentials can be bounded by $C T$, the mean value theorem yields
\begin{align*}
|(n_i^1-n_i^2)| (t,x) & \leq n_i^0(x) e^{C T} \left| \int_0^t  \left[ R_i\left(x, \tilde{\rho}^1_1(s), \ldots, \tilde{\rho}^1_N(s)\right) -R_i\left(x, \tilde{\rho}^2_1(s), \ldots, \tilde{\rho}^2_N(s)\right) \right]  \, ds \right| \\
				  & \leq   \|d_i\|_{L^\infty} \|A\|_\infty n_i^0(x) e^{C T} \left[ \int_0^T   \|\tilde{\rho}^1(s)-\tilde{\rho}^2(s)  \|_\infty \, ds  \right]  \\
				  & \leq  C n_i^0(x) T e^{CT} \|m_1-m_2\|_E.
\end{align*}
This implies after integrating in $x$ and taking the supremum both in $t\in[0,T]$ and in $i=1, \ldots, N$:
\begin{equation*}
\|n^1 - n^2\|_E \leq C \rho^{sup}T e^{C T} \|m^1-m^2\|_{E} .
\end{equation*}
It provides us with the contracting property for $\Phi$ whenever $T$ is small enough.

We conclude by noticing that $T$ has been chosen small independently of the initial data, so that the argument can be iterated on $[0,T]$, $[T,2T]$, etc. 
\end{proof}

{
\bibliography{BibliographyLV}
\bibliographystyle{acm}}

\end{document}